\documentclass[a4paper]{coneural}
\usepackage{amssymb}
\usepackage{amsmath}
\usepackage{amsthm}
\usepackage{authblk}
\usepackage[vlined]{algorithm2e}
\hypersetup{
    pdfcreator={RIST},
    pdfproducer={RIST}
}

\title{A new class of metrics for spike trains}

\author[1,2,3]{C\u{a}t\u{a}lin V. Rusu}
\author[1]{R\u{a}zvan V. Florian}

\affil[1]{Center for Cognitive and Neural Studies (Coneural), Romanian Institute of Science and
Technology, Cluj-Napoca, Romania
}
\affil[2]{Computer Science Department, Babe\c{s}-Bolyai University,
 Cluj-Napoca, Romania
}				
\affil[3]{Frankfurt Institute for Advanced Studies, Frankfurt am Main, Germany
}
\date{}

\newtheorem{proposition}{Proposition}[section]
\newtheorem{lemma}{Lemma}[section]

\begin{document}

\maketitle

\hyphenation{spi-kes Pom-pe-iu pse-u-do-co-de nu-me-ri-cal-ly}

\begin{abstract}
The distance between a pair of spike trains, quantifying the differences between them, can be measured using various metrics. Here we introduce a new class of spike train metrics, inspired by the Pompeiu-Hausdorff distance, and compare them with existing metrics. Some of our new metrics (the modulus-metric and the max-metric) have characteristics that are qualitatively different than those of classical metrics like the van Rossum distance or the Victor \& Purpura distance. The modulus-metric and the max-metric are particularly suitable for measuring distances between spike trains where information is encoded in bursts, but the number and the timing of spikes inside a burst does not carry information. The modulus-metric does not depend on any parameters and can be computed using a fast algorithm, in a time that depends linearly on the number of spikes in the two spike trains. We also introduce localized versions of the new metrics, which could have the biologically-relevant interpretation of measuring the differences between spike trains as they are perceived at a particular moment in time by a neuron receiving these spike trains.
\\
\\
An edited version has been published in Neural Computation 26(2), pp. 306--348, 2014, \url{http://dx.doi.org/10.1162/NECO_a_00545}
\\
\\
Rate or review this paper at \url{http://epistemio.com/p/r43uM8ke}
\end{abstract}

\newpage

\section{Introduction}
In recent years several spike train distances, some inspired by existing mathematical distances while others not, have been proposed and used to measure the variability of the neural activity \cite{VP96,VP97,R01,schreiber2003,V05,schrauwen07,kreuz2007,kreuz2009,naud2010,kreuz2011,kreuz2013}. The distance between two spike trains reflects their similarity or dissimilarity. Because how information is represented in the spatio-temporal patterns of spike times exchanged by neurons is still a heavily debated topic in neuroscience, metrics based on different neural codes are available. Traditionally it was thought that the mean firing rate of neurons encapsulated all the relevant information exchanged by neurons. This idea dates back to the work of \citet{A26} who showed that the firing rate of motor neurons is proportional to the force applied. More recently, scientists have revealed increasing evidence of the importance of precise spike timings to representing information in the brain \cite{B04,vanrullen2005,TFS08}. For example, temporally-structured multicell spiking patterns were observed in the hippocampus and cortex, and were associated to memory traces \cite{Neal99,JW07} while the coding of information in the phases of spikes relative to a background oscillation has been observed in many brain regions \cite{LSLR05,JKEF07,FNS07,Meal08,SWM09}. This change in viewpoints from rate codes to spike-time codes is also reflected in spike train metrics.

The most basic metrics are the ones that rely on counting the total number of spikes within a spike train. A major drawback of such an approach is that all the temporal structure is lost. Even though binning techniques were introduced as a way to overcome this loss by dividing the spike train into discrete bins, the temporally--encoded information within a bin was neglected \cite{GASS91}. Other, more complex spike train metrics can be obtained by focusing on the precise spike timing instead of their total count. One example is the \citet{R01} distance which is calculated by filtering the time series corresponding to the raw spike train with a smoothing kernel, typically an exponential one, and then using the standard Euclidean distance. Another metric is the \citet{schreiber2003} correlation-based measure, which uses a symmetric (Gaussian) filter. In both cases, the choice of the kernel's parameters is arbitrary and has a high influence on the properties of the metric.

Another metric was introduced by \citet{VP96,VP97}. According to this metric, the distance between two spike trains is given by the minimum cost of basic operations needed to transform one spike train into the other. The basic operations are insertion or deletion of spikes, with a cost of 1, and the shifting of a spike, with a cost of $q|\delta t|$ where $q$ is a parameter and $\delta t$ the shifting interval. The parameter $q$ significantly influences the behavior of the metric: for $q = 0$ the metric counts the difference in the total number of spikes, while for large values of $q$ the metric returns the number of non-coincident spikes.

\citet{kreuz2007,kreuz2009,kreuz2011,kreuz2013} have introduced more recently a series of parameter-free and time-scale independent measures of spike train synchrony.

Here we introduce a new class of spike train metrics inspired by the Pompeiu-Hausdorff distance between two non-empty compact sets \cite{Pompeiu,H14}. Preliminary results have been presented in abstract form in \cite{rusu2010}.

\section{A new class of spike train metrics}\label{sec2}
We consider bounded, nonempty spike trains of the form
\begin{equation}
T = \{t^{(1)}, \ldots , t^{(n)}\},
\end{equation}
where $t^{(i)} \in \mathbb{R}$ are the ordered spike times and  $n \in \mathbb{N}^*$ is the number of spikes in the spike train. We consider spike trains with no overlapping spikes. If $n>1$, then $t^{(i-1)} < t^{(i)}$, $\forall i>1$. We denote by $a$ and $b$ some bounds of the considered spike trains, i.e. $a \leq t^{(i)} \leq b, \; \forall  t^{(i)}$, with $a, b \in \mathbb{R}$, finite, and $a<b$. We denote by $\mathcal{S}_{[a, b]}$ the set of all possible such spike trains bounded by $a$ and $b$. We study metrics that compute the distances between two spike trains $T$ and $\bar{T}$ from $\mathcal{S}_{[a, b]}$.

The new metrics that we introduce are inspired by the Pompeiu-Hausdorff distance \cite{Pompeiu,H14}. When applied to a pair of spike trains, the Pompeiu-Hausdorff distance $h$ returns the largest difference, in absolute value, between the timing of a spike in one train and the timing of the closest spike in the other spike train:
\begin{equation}
h(T,\bar{T}) = \max\left\{\sup_{t \in T} \; \inf_{\bar{t} \in \bar{T}} \; |t-\bar{t}|, \; \sup_{\bar{t} \in \bar{T}} \; \inf_{t \in T} \; |t-\bar{t}|\right\}, \label{equation_hausdorffi}
\end{equation}
or, equivalently, the minimal number $\epsilon \geq 0$ such that the closed $\epsilon$-neighborhood of $T$ includes $\bar{T}$ and the closed $\epsilon$-neighborhood of $\bar{T}$ includes $T$:
\begin{equation}
h(T,\bar{T}) = \inf\left\{\epsilon  \; \text{such that} \; |t-\bar{t}| \leq \epsilon, \; \forall t \in T, \; \forall \bar{t} \in \bar{T} \right\}.
\end{equation}
Another equivalent form of the Pompeiu-Hausdorff distance is the following (\citealp[pp. 105--110]{P05}; \citealp[pp. 117--118]{RW09}; \citealp[pp. 47--48]{DD09}):
\begin{equation}
h(T,\bar{T}) = \sup_{x \in \mathbb{R}} \left|
\inf_{t \in T} \; |t-x| - \inf_{\bar{t} \in \bar{T}} \; |\bar{t}-x|\right|. \label{equation_hausdorff_modulus}
\end{equation}

We introduce a distance $d$ between an arbitrary timing $x \in \mathbb{R}$ and a spike train $T$:
\begin{equation}\label{equation_distance_d}
d(x,T)  = \inf_{t \in T} \; |t-x|.
\end{equation}
Eq. \ref{equation_hausdorffi} can then be rewritten as
\begin{equation}
h(T,\bar{T}) = \max\left\{\sup_{t \in T} \; d(t, \bar{T}), \; \sup_{\bar{t} \in \bar{T}} \; d(\bar{t}, T)\right\} \label{equation_hausdorffii}
\end{equation}
and Eq. \ref{equation_hausdorff_modulus} as
\begin{equation} \label{equation_hausdorff}
h(T,\bar{T}) = \sup_{x \in \mathbb{R}}  \left|
d(x, T)-d(x, \bar{T})
\right|.
\end{equation}
Because the global supremum is achieved on the interval $[a,b]$ (Appendix \ref{section_equiv_haus}), we also have:
\begin{equation}
h(T,\bar{T}) = \sup_{x \in [a, \, b]}  \left|
d(x, T)-d(x, \bar{T})
\right|. \label{equation_hausdorff_ab}
\end{equation}
The Pompeiu-Hausdorff metric has a quite poor discriminating power, as for many variations of the spike trains the distances will be equal and any spike train space endowed with this metric would be highly clustered. Our new metrics generalize the form of the Pompeiu-Hausdorff distance given in Eq. \ref{equation_hausdorff_ab}, by introducing features that are more sensitive to spike timings.

\subsection{The max-metric}
We consider $\mathbb{B}$ to be the space of all continuous, positive functions $\mathcal{H} \colon [0, b-a] \rightarrow \mathbb{R}^+$ that satisfy the condition, $\forall u \in [a, b]$,
\begin{equation}
\int_a^b \mathcal{H}(|u-s|) \; \textrm{d}s >0.
\label{equation_h_condition_integral}
\end{equation}
Because
\begin{align}
\int_a^b \mathcal{H}(|u-s|) \; \textrm{d}s & = \int_a^u \mathcal{H}(u-s) \; \textrm{d}s +  \int_u^b \mathcal{H}(s-u) \; \textrm{d}s\\
& = \int_0^{u-a} \mathcal{H}(s) \; \textrm{d}s +  \int_0^{b-u} \mathcal{H}(s) \; \textrm{d}s,
\end{align}
and because $\mathcal{H}$ is continuous, a sufficient condition for satisfying Eq. \ref{equation_h_condition_integral} is that $\mathcal{H}(0)>0$.

On compact sets, continuous functions are bounded \cite[p. 56]{Protter}. We denote by $m$ the upper bound of $\mathcal{H}$ on the interval $[0, b-a]$, i.e.
\begin{equation}
0 \leq \mathcal{H}(x) < m < \infty, \; \forall \; x \in [0, b-a].\label{equation_h_conditions}
\end{equation}
In typical applications, $\mathcal{H}(x)$ has a maximum for $x=0$ and is a decreasing function of $x$, for example an exponential,
\begin{equation}
\mathcal{H}_E(x)=\frac{1}{\tau} \; \exp\left(-\frac{x}{\tau}\right),\label{equation_H_exponential}
\end{equation}
or a Gaussian,
\begin{equation}
\mathcal{H}_G(x)=\frac{1}{\tau \; \sqrt{2\; \pi}} \; \exp\left(-\frac{x^2}{2 \; \tau^2}\right),\label{equation_H_gaussian}
\end{equation}
with $\tau$ a positive parameter.

We introduce the max-metric as
\begin{equation}\label{maxm}
d_m(T,\bar{T}) = \int_a^b \sup_{x \in [a,b]} \left\lbrace | d(x,T) - d(x,\bar{T})| \; \mathcal{H}(|s-x|)  \right\rbrace \text{d} s.
\end{equation}
The max-metric integrates, through the variation of $s$ along the interval $[a,b]$ that contains the two spike trains, the maximum of the difference, in absolute value, between the distances from a point $x$ in that interval to the two spike trains, weighted by the kernel $\mathcal{H}(|s-x|)$ which focuses locally around $s$. Figure \ref{fig:maxmetric} shows how the distance between two spike trains is computed using the max-metric.

The max-metric is a generalization of the Pompeiu-Hausdorff distance, since in the particular case that $\mathcal{H}(\cdot)=1/(b-a)$ we have $d_m(T,\bar{T})=h(T,\bar{T})$.

In Appendix \ref{section_analysis_max_metric} we show that $d_m$ is finite and that it satisfies the properties of a metric. We also show that regardless of the kernel $\mathcal{H}$ all the max-metrics are topologically equivalent to each other \cite[p. 229]{osearcoid} because they are equivalent to the Pompeiu-Hausdorff distance. Each metric will generate the same topology and thus any topological property is invariant under an homeomorphism. This means that the metrics generate the same convergent sequences in the space of spike trains $\mathcal{S}_{[a, b]}$. Thus, learning rules derived from these metrics will converge in the same way, regardless of the choice of $\mathcal{H}$.

\begin{figure}[hp]
\hfill
\begin{center}
\includegraphics[width=\textwidth]{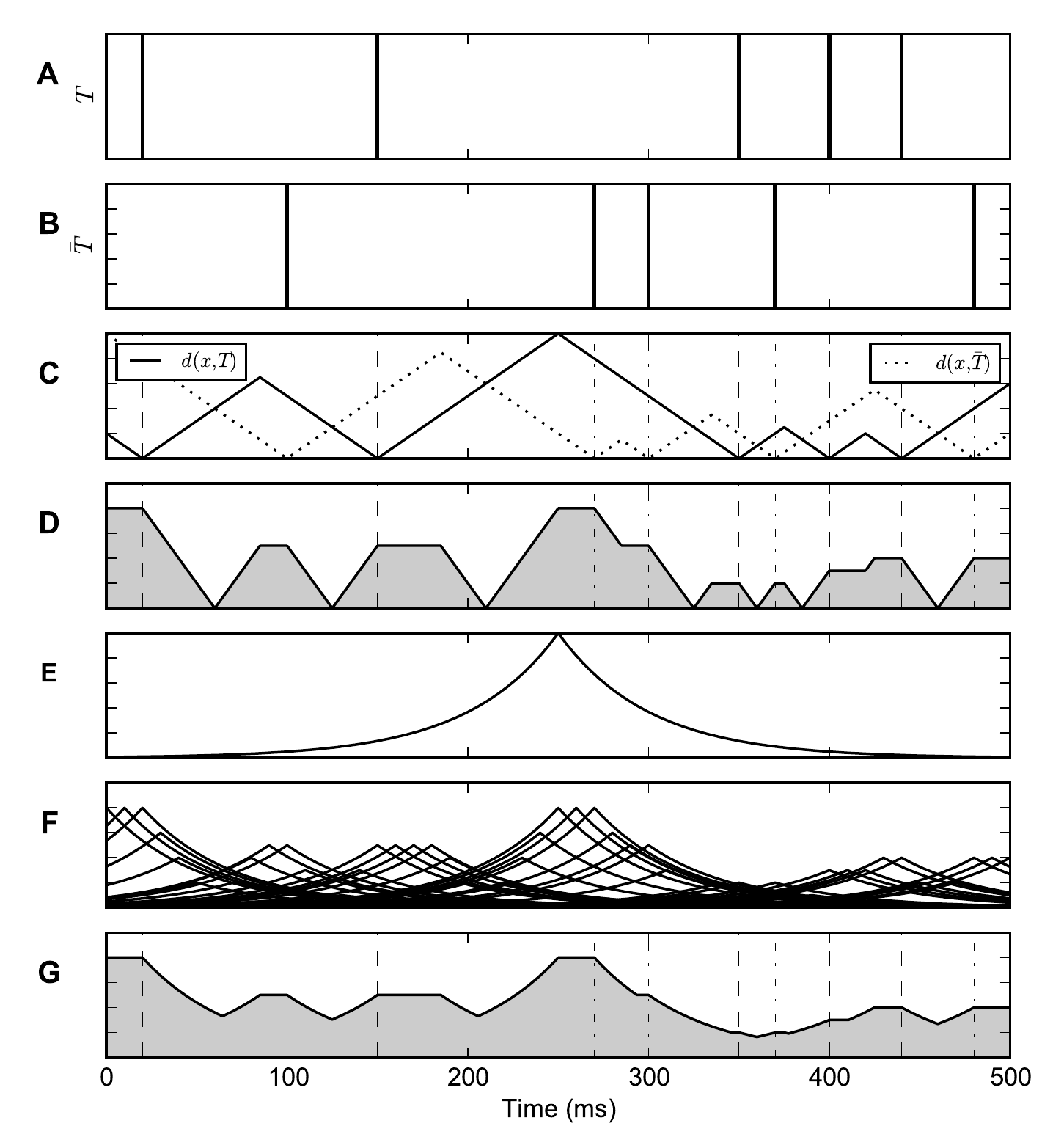}
\end{center}

\caption{The modulus-metric and the max-metric. (A) Spike train $T = \{ 20, 150, 350, 400, 440\}$ ms. Each spike time is represented as a vertical bar. (B) Spike train $\bar{T} = \{ 100, 270, 300, 370, 480\}$ ms. (C) The  distances between a timing $x$ and the spike trains, $d(x,T)$ and $d(x,\bar{T})$, as a function of $x$. (D) The difference $ \left| d(x,T) - d(x,\bar{T}) \right|$ as a function of $x$. The modulus-metric distance $d_o(T, \; \bar{T})$ is the area under this curve. (E) The kernel $\mathcal{H}(|s-x|)$ as a function of $s$ with a fixed $x=250$ ms. $\mathcal{H}$ is an exponential (Eq. \ref{equation_H_exponential}) with a decay constant $\tau=50$ ms. (F) The weighted difference $ \left| d(x,T) - d(x,\bar{T}) \right| \mathcal{H}(|s-x|)$ as function of $s$ for discrete values of $x$. (G) The supremum of the  weighted difference, $\sup_{x \in [a,b]} \left\lbrace \left| d(x,T) - d(x,\bar{T}) \right| \mathcal{H}(|s-x|)\right\rbrace$, as a function of $s$. The max-metric distance $d_m(T, \; \bar{T})$ is the area under this curve. In (C), (D), and (G), the dashed vertical lines represent the timing of spikes in $T$ and $\bar{T}$.}
\label{fig:maxmetric}
\end{figure}

\begin{figure}[h!p]
\hfill
\begin{center}
\includegraphics[width=\textwidth]{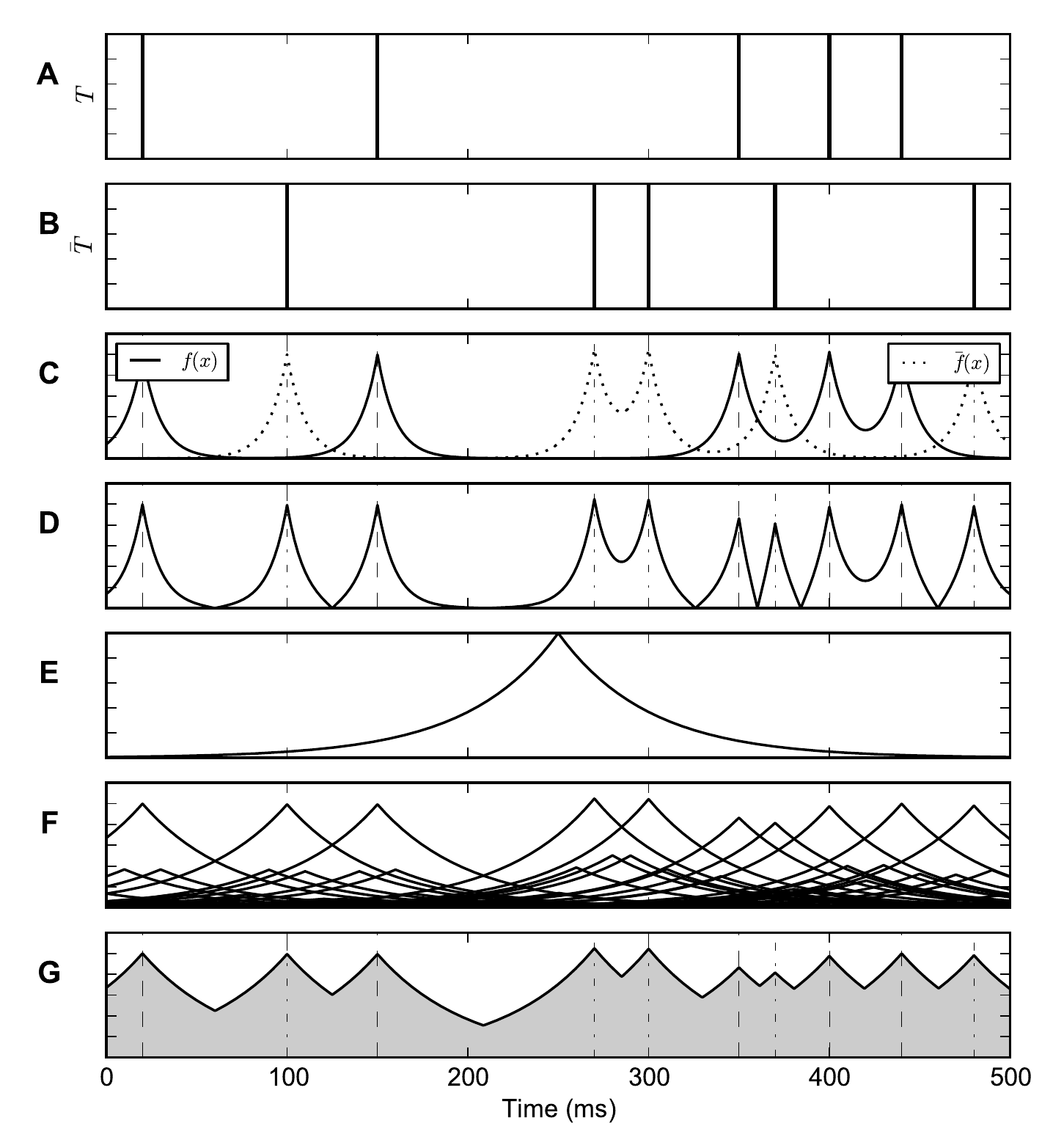}
\end{center}
\caption{The convolution max-metric. (A) Spike train $T = \{ 20, 150, 350, 400, 440\}$ ms. Each spike time is represented as a vertical bar. (B) Spike train $\bar{T} = \{ 100, 270, 300, 370, 480\}$ ms. (C) The spike trains $T$ and $\bar{T}$ filtered with an exponential kernel (Eq. \ref{equation_K_exponential}) with a  decay constant $\tau=10$ ms. (D) The difference $ \left| f(x) -  \bar{f}(x) \right|$ as a function of $x$. (E) The kernel $\mathcal{H}(|s-x|)$ as a function of $s$ with a fixed $x=250$ ms. $\mathcal{H}$ is an exponential (Eq. \ref{equation_H_exponential}) with a decay constant $\tau=50$ ms. (F) The weighted difference $ \left| f(x) - \bar{f}(x) \right| \mathcal{H}(|s-x|)$ as function of $s$ for discrete values of $x$. (G) The supremum of the weighted difference, $\sup_{x \in [a,b]} \left\lbrace \left| f(x) - \bar{f}(x) \right| \mathcal{H}(|s-x|)\right\rbrace$. The convolution max-metric distance $d_c(T, \; \bar{T})$ is the area under this curve. In (C), (D), and (G), the dashed vertical lines represent the timing of spikes in $T$ and $\bar{T}$.}
\label{fig:maxmetricconv}
\end{figure}

\subsection{The modulus-metric}
We define the modulus-metric as
\begin{equation}\label{m3p}
d_o(T,\bar{T}) = \int_{a}^{b} |d(s,T) - d(s,\bar{T})|  \; \textrm{d}s.
\end{equation}
Figure \ref{fig:maxmetric} A--D shows how the distance between two spike trains is computed using the modulus-metric. The modulus-metric is a particular case of the max-metric in the limit that $\mathcal{H}$ is
\begin{equation}
\mathcal{H}(x)=\begin{cases}
   1, &\text{if}\quad x=0,\\
    0,&\text{otherwise}.
    \end{cases}
\end{equation}

The modulus-metric does not depend on any kernels or parameters and it also allows a fast computer implementation in linear complexity. This is because the graph of the function $\phi(s)=|d(s,T) - d(s, \bar{T})|$ is made out of line segments that join or end in the following points: all timings of spikes in the two spike trains $T$ and $\bar{T}$; the time moments that lie at the middle of the interval between two neighboring spikes from the same spike train; the time moments that lie at the middle of the interval between a pair of neighboring spikes where the two spikes belong to different spike trains; and the bounds $a$ and $b$. This is exemplified in Figures \ref{fig:maxmetric} D and \ref{fig:localmaxmetric} D. We denote by $\mathcal{P}$ the set of these points. In order to compute the integral of this function $d_o=\int_a^b \phi(s) \; \text{d}s$, it is sufficient to compute the function at the points from $\mathcal{P}$. Since between these points the function is linear, the integral can be then computed exactly.

Algorithm \ref{algorithm1} presents an implementation of the $d_o$ metric in pseudo-code. In this algorithm, the function $\phi$ is computed in a set of points that includes $\mathcal{P}$ but also other points. In a second Algorithm \ref{algorithm2}, the set $\mathcal{P}$ as well as the value of $\phi$ in the points of $\mathcal{P}$ is computed with a single pass through the spikes in the two spike trains. The algorithms' duration depends linearly on the number of spikes in the two spike trains, $n+\bar{n}$. Implementations in Python and C++ of the two algorithms are freely available at \url{https://github.com/modulus-metric/}.

It can be shown that the distance $d_{o}$ is finite and that it satisfies the properties of a metric by particularizing the proofs in Appendix \ref{section_analysis_localized_modulus_metric} with $\mathcal{L}(x)=1$, $\forall x\in [0, b-a]$.

\subsection{The convolution max-metric}\label{section_convolution}
The max-metric can also be given in a convolution form. To construct this form of the metric we consider an arbitrary continuous, smooth, positive kernel $\mathcal{K} \colon \mathbb{R} \rightarrow \mathbb{R}^+$, with the properties that it is strictly increasing for $x<0$ and strictly decreasing for $x>0$, with $\mathcal{K}(0)=p$ finite and positive. We thus have $0 \leq \mathcal{K}(x) \leq p, \: \forall x \in \mathbb{R}$.
Typically, $\mathcal{K}$ is an exponential,
\begin{equation}
\mathcal{K}_E(x)=\exp\left(-\frac{|x|}{\tau}\right),\label{equation_K_exponential}
\end{equation}
with $\tau$ being a positive parameter. We convolve the two spike trains $T$ and $\bar{T}$ with the filtering kernel $\mathcal{K}$ to obtain
\begin{align}
f(x) &= \sum_{i=1}^{n}  \mathcal{K}(x-t^{(i)})\\
\bar{f}(x) &= \sum_{i=1}^{\bar{n}} \mathcal{K}(x-\bar{t}^{(i)}).
\end{align}
We denote by $\mathcal{F}_{[a, b]}$ the set of all possible filtered spike trains from $\mathcal{S}_{[a, b]}$.

We also consider a function $\mathcal{H} \in \mathbb{B}$ that is strictly positive, that is derivable on $(0,b-a)$ and that has a bounded derivative.

The convolution max-metric is defined as
\begin{equation}\label{m1}
d_{c}(T,\bar{T})= \int_a^b  \sup_{x \in [a,b]} \left\{ | f(x) - \bar{f}(x)| \; \mathcal{H}(|s-x|)\right\} \textrm{d}s.
\end{equation}
Figure \ref{fig:maxmetricconv} shows how the distance $d_c$ between two spike trains is computed. In Appendix \ref{section_analysis_convolution_max_metric} we show that $d_c$ is finite and that it satisfies the properties of a metric.

\section{Localized metrics}
In the case of the max metric, with or without convolution, the use of the kernel $\mathcal{H}$ served the purpose of providing a local perspective, around each point within $[a, b]$, of the distance between the spike trains. These local perspectives were then integrated in the final distance. In this section we introduce different metrics that also depend on a kernel $\mathcal{L}$, but for which the kernel has a different purpose. Here, the kernel may be regarded as a magnifying glass to be used to focus on one specific area of the spike trains. The kernel  should be a continuous, positive function, $\mathcal{L}: [0, b-a]\rightarrow \mathbb{R}^{+}$.
Similarly with $\mathcal{H}$, because $\mathcal{L}$ is a continuous function with bounded support, it is bounded, i.e.
\begin{equation}
0 \leq \mathcal{L}(x) < m < \infty, \; \forall \; x \in [0, b-a].\label{equation_l_conditions}
\end{equation}

Such a metric is biologically relevant if, for example, we take into consideration how a neuron responds to input spikes.  Recent spikes influence more the neuron than old ones. If we would like to measure the distance between two spike trains according to how the differences between them influence the activity of a neuron at a particular moment of time, recent differences should be taken into account with a greater weight than differences in the distant past. For the localized metrics, $\mathcal{L}$ could thus model the shape of postsynaptic potentials (PSP) that reflects the dynamics of the effect of one presynaptic spike on the studied neuron.
Thus, $\mathcal{L}$ could typically be an exponential, $\mathcal{L}_E=\mathcal{H}_E$ (Eq. \ref{equation_H_exponential}), an alpha function,
\begin{equation}
\mathcal{L}_\alpha(x)=\frac{x}{\tau^2} \; \exp\left(-\frac{x}{\tau}\right), \label{equation_H_local_alpha}
\end{equation}
a difference between two exponentials,
\begin{equation}
\mathcal{L}_D(x)=\frac{\tau}{\tau-\tau_s} \; \left[ \exp\left(-\frac{x}{\tau}\right) - \exp\left(-\frac{x}{\tau_s}\right) \right], \label{equation_H_local_double}
\end{equation}
or, if we model the postsynaptic potential generated in an integrate-and-fire neuron by a synaptic current that is a difference between two exponentials,
\begin{align}
\mathcal{L}_I(x)=\frac{\tau}{\tau_s-\tau_r} \;  & \left\{ \frac{\tau_s}{\tau-\tau_s} \; \left[ \exp\left(-\frac{x}{\tau}\right) - \exp\left(-\frac{x}{\tau_s}\right) \right] \right. \nonumber \\
 - & \;\; \left.\frac{\tau_r}{\tau-\tau_r} \; \left[ \exp\left(-\frac{x}{\tau}\right) - \exp\left(-\frac{x}{\tau_r}\right) \right] \right\},
 \label{equation_H_local_double_iaf}
\end{align}
where $\tau$, $\tau_s$, and $\tau_r$ are positive parameters.

\subsection{Localized max-metric}
We introduce the localized max-metric as
\begin{equation}\label{m4}
d_{l}(T,\bar{T}) = \int_{a}^{b} \mathcal{L}(b-s) \sup_{x \in [s,b]} | d(x,T) - d(x,\bar{T})|  \; \textrm{d}s.
\end{equation}
Figure \ref{fig:localmaxmetric} shows how the distance $d_{l}$ between two spike trains is computed. The differences between the spike trains that account the most for the distance are those that are close to $b$. The shape of $\mathcal{L}$ has a high impact on the behavior of the metric.

In Appendix \ref{section_analysis_localized_max_metric} we show that the distance $d_{l}$ is finite and that it satisfies the properties of a metric.

\begin{figure}[h!p]
\hfill
\begin{center}
\includegraphics[width=\textwidth]{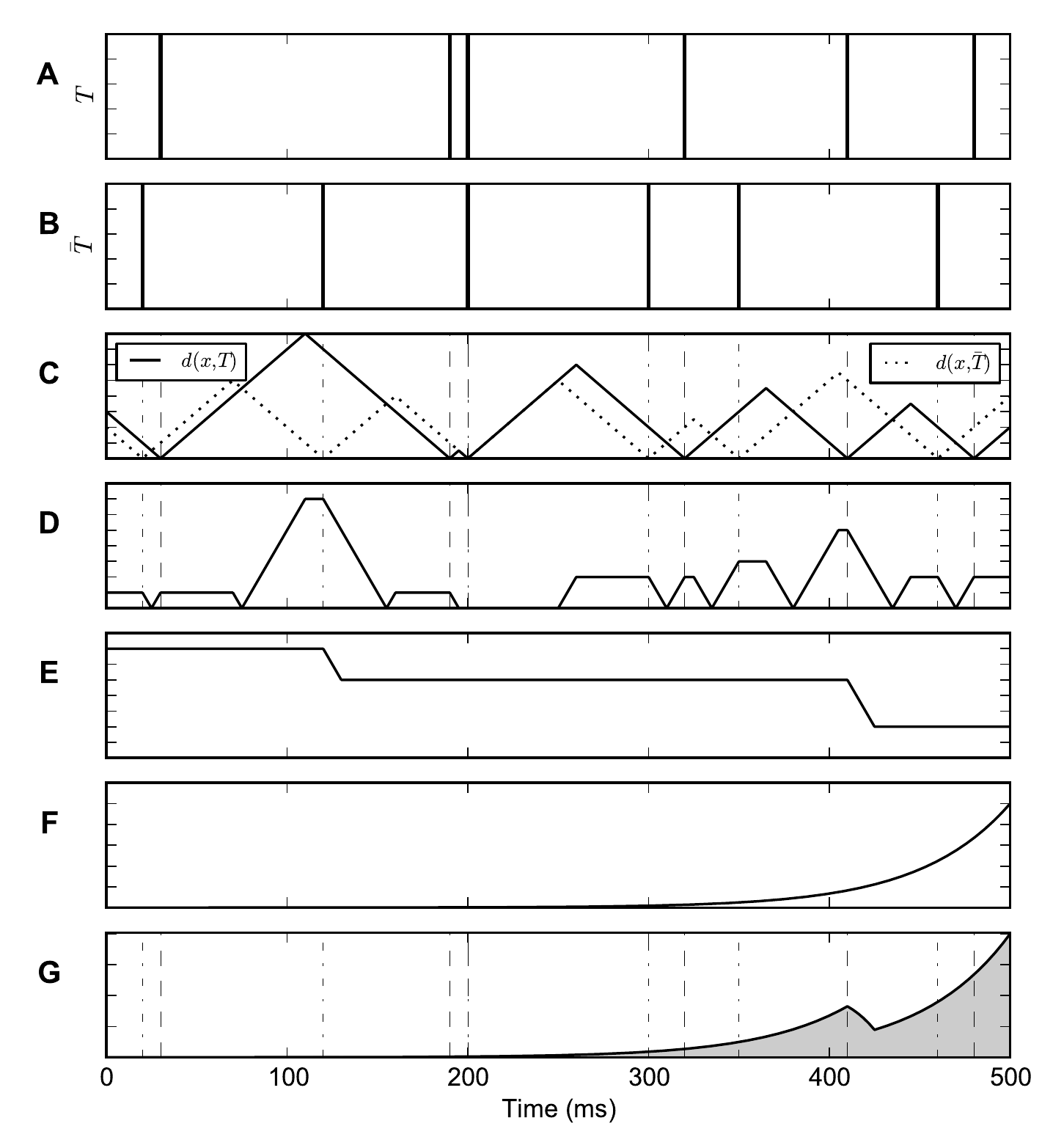}
\end{center}
\caption{The localized max-metric. (A) Spike train $T=\{30,190,200,320,410,480\}$ ms. Each spike time is represented as a vertical bar. (B) Spike train $\bar{T}=\{20,120,200,300,350,460\}$ ms. (C) The  distances between a timing $x$ and the spike trains, $d(x,T)$ and $d(x,\bar{T})$, as function of $x$. (D) The difference $ \left| d(x,T) - d(x,\bar{T}) \right|$ as a function of $x$.  (E) The supremum, $\sup_{x \in [s,b]} \left|d(x,T) - d(x,\bar{T}) \right|$, as a function of $s$. (F) The kernel $\mathcal{L}(b-s)$ as function of $s$, an exponential  (Eq. \ref{equation_H_exponential}) with a decay constant $\tau=50$ ms. (G) The supremum weighted by the kernel, $\mathcal{L}(b-s) \sup_{x \in [s,b]} | d(x,T) - d(x,\bar{T})|$, as a function of $s$. The localized max-metric distance $d_l(T, \; \bar{T})$ is the area under this curve. In (C), (D), and (G), the dashed vertical lines represent the timing of spikes in $T$ and $\bar{T}$.}
\label{fig:localmaxmetric}
\end{figure}

\subsection{Localized modulus-metric}
The modulus metric can also be given in a localized form:
\begin{equation}\label{m3}
d_n(T,\bar{T}) = \int_{a}^{b} |d(s,T) - d(s,\bar{T})| \; \mathcal{L}(b-s)  \; \textrm{d}s.
\end{equation}
In Appendix \ref{section_analysis_localized_modulus_metric} we show that $d_n$ is finite and that it satisfies the properties of a metric.

\subsection{Localizing the van Rossum metric}
A localization by a kernel $\mathcal{L}$ similar to the one we applied in Eqs. \ref{m4} and \ref{m3} can also be applied to existing metrics.
Let $T, \bar{T} \in \mathcal{S}_{[a,b]}$. Consider the \citet{R01} distance defined as
\begin{equation}
d_R(T,\bar{T}) = \int_{-\infty}^{\infty} (g(s)-\bar{g}(s))^2 \;  \textrm{d}s,
\end{equation}
where
\begin{align}
g(s) &= \sum_{i=1}^{n}  H(s-t^{(i)}) \; \mathcal{K}_E(s-t^{(i)})\\
\bar{g}(s) &= \sum_{i=1}^{\bar{n}} H(s-\bar{t}^{(i)}) \; \mathcal{K}_E(s-\bar{t}^{(i)}),
\end{align}
$H$ is the Heaviside step function, $H(x)=0$ if $x < 0$ and $H(x)=1$ if $x\geq0$, and $\mathcal{K}_E$ is defined in Eq. \ref{equation_K_exponential}.
When localized with $\mathcal{L}$ the distance becomes
\begin{equation}\label{mvr}
d_{Rl}(T,\bar{T}) = \int_{-\infty}^{b} (g(s)-\bar{g}(s))^2 \; \mathcal{L}(b-s) \; \textrm{d}s.
\end{equation}
Here $\mathcal{L}$ may be chosen to have the same qualitative properties as the kernels used in Eqs. \ref{equation_H_local_alpha}--\ref{equation_H_local_double_iaf}.

\section{Simulation results}\label{sec3}
We analyzed the behavior of the introduced metrics through computer simulations using simple setups. Across all simulations, $\mathcal{H}$ was an exponential (Eq. \ref{equation_H_exponential}) with $\tau = 10$ ms. For the localized metrics $d_l$ and $d_n$, $\mathcal{L}$ also was an exponential (Eq. \ref{equation_H_exponential}) with $\tau = 20$ ms. The convolution kernels for the $d_c$ and van Rossum distances were chosen as exponentials (Eq. \ref{equation_K_exponential}) with $\tau = 10$ ms. The width of the Gaussian filter for the Schreiber et al. distance was 10 ms. For the Victor \& Purpura distance we set $q=0.2$ ms$^{-1}$, except in Figs. \ref{fig:jitter_removal_1} -- \ref{fig:jitter_removal_3} where we also used $q=0.001$ ms$^{-1}$ as in \cite{dauwels2009}. We set $a=0$ ms and $b$ the maximum length of spike trains (either 200 or 500 ms, except in Fig. \ref{fig:speed} where the length was variable).

\subsection{Inserting or shifting one spike}
We computed the distances between a particular spike train $T$ and a spike train $\bar{T}$ obtained from it by either inserting or shifting one spike. In the insertion case, $\bar{T}$ was generated by inserting a spike into $T$ at various timings. In the shifting case, $\bar{T}$ was generated by shifting a particular spike of $T$. The distance was plotted against the time of the inserted spike or of the shifted spike to see how the change is reflected by the metrics. To compute the distance we used the introduced metrics, a simple spike count distance ($c$), the \citet{R01} ($d_R$), \citet{VP96,VP97} ($d_{VP}$), \citet{schreiber2003} ($s$), and Pompeiu-Hausdorff ($h$) distances, as well as the ISI-distance ($k_i$) and the improved SPIKE-distance ($k_s$) by \citet{kreuz2007,kreuz2013}. The spike count distance is defined as
\begin{equation}
c(T,\bar{T}) = \frac{|n-\bar{n}|}{\max ( n,\bar{n})},
\end{equation}
where $n$ and $\bar{n}$ are the number of spikes in each train. The spike trains were 200 ms long.

The results for the insertion case are presented in Figure \ref{fig:a1}. The Victor \& Purpura distance was constant since the cost of adding and removing a spike is fixed at 1 regardless of its timing. Similarly, the van Rossum metric was insensitive to the time of the inserted spike, a result which can be also shown analytically \cite{R01}. The spike count distance remained constant regardless of where the spike was inserted. The results were qualitatively different in the case of our newly introduced distances, with the exception of the convolution max-metric, and in the case of the Kreuz et al. and Schreiber et al. metrics. In the case of the Pompeiu-Hausdorff distance, max-metric, modulus-metric, of the localized variants of the max-metric and the modulus-metric, and of the Schreiber et al. and Kreuz et al. metrics, the insertion time of the spike had a significant impact on the outcome (Figure \ref{fig:a1}). When the inserted spike overlapped an existing spike, the Schreiber et al. distance had a low value but remained non-zero, while our new metrics, with the exception of the convolution max-metric, as well as the Kreuz et al. metrics, returned a zero distance. It can also be seen that the localized distances were strongly influenced by the shape of the $\mathcal{L}$ kernel.

The results for the shifting case are presented in Figure \ref{fig:a2}. When the spike at $t^{(4)}$ was shifted, the Victor \& Purpura and van Rossum distances were dependent only on the width of the shifting interval. These results are confirmed by analytical derivations \cite{VP96, R01}. As in the previous case, the spike count distance was insensitive to the shift operation and remained zero since the number of spikes did not change. In contrast,  our newly introduced distances, with the exception of the convolution max-metric, and the Schreiber et al. and Kreuz et al. metrics showed a dependence not only on the width of the shifting interval but also on the particular timing of the shifted spike. The results are similar to the ones in Fig. 4 of \cite{kreuz2011}.

\begin{figure}[h!p]
\hfill
\begin{center}
\includegraphics[width=0.8 \textwidth]{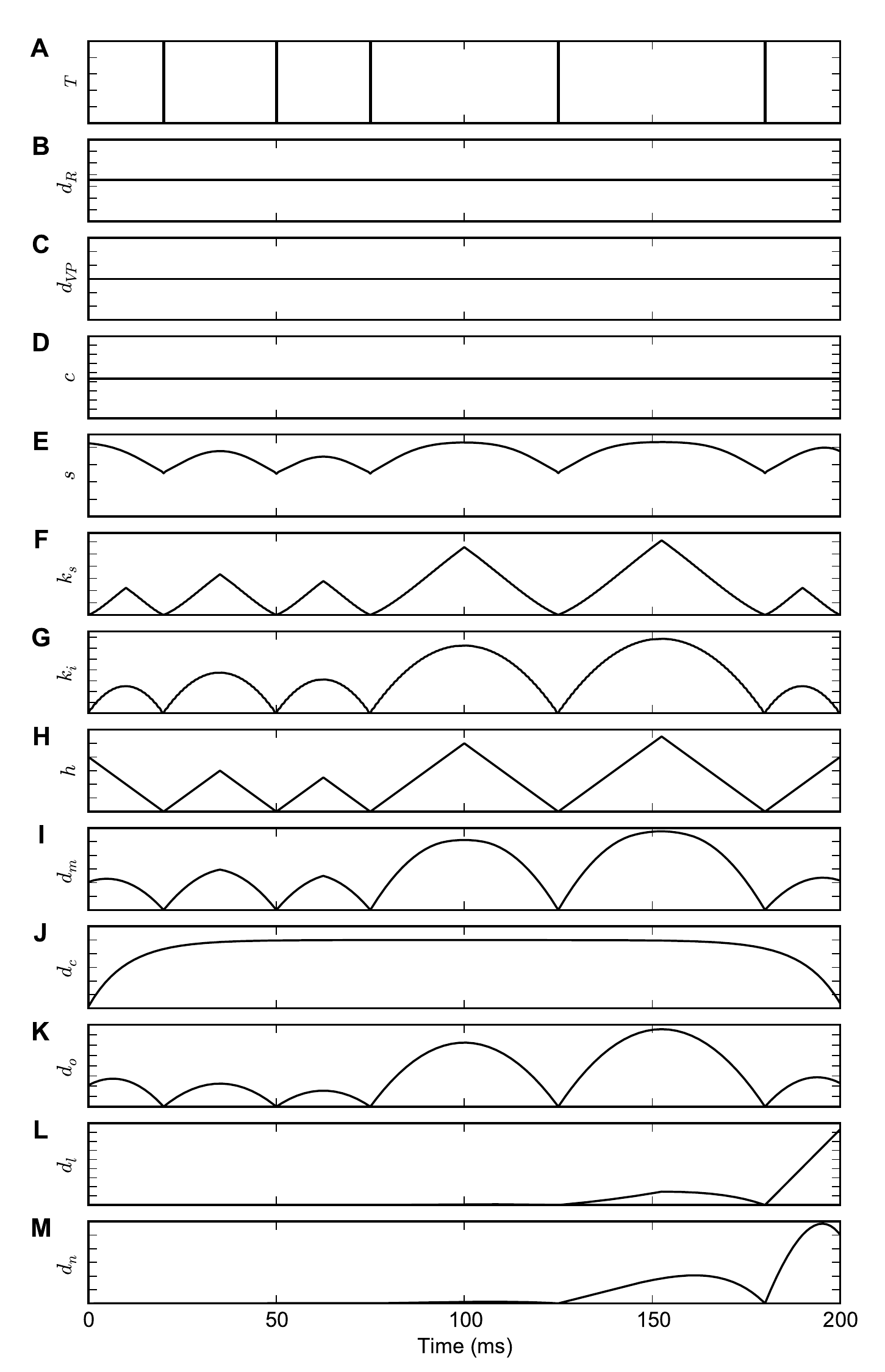}
\end{center}
\caption{Metric comparison - inserting a spike. We computed the distance between the spike train $T= \{ 20, 50, 75, 125, 180\}$ ms and one obtained from this spike train by inserting a spike at different locations. At each time $x \in [0, 200]$ ms a spike was inserted to generate $\bar{T}$ and the distance between $T$ and $\bar{T}$ was computed and plotted against $x$. (A) The spike train $T$. (B) The van Rossum distance. (C) The Victor \& Purpura distance. (D) The spike count distance. (E) The Schreiber et al. distance. (F) The Kreuz et al. improved SPIKE-distance. (G) The Kreuz et al. ISI-distance. (H) The Pompeiu-Hausdorff distance. (I) The max-metric. (J) The convolution max-metric. (K) The modulus-metric. (L) The localized max-metric. (M) The localized modulus-metric.}
\label{fig:a1}
\end{figure}

\begin{figure}[h!p]
\hfill
\begin{center}
\includegraphics[width=0.8 \textwidth]{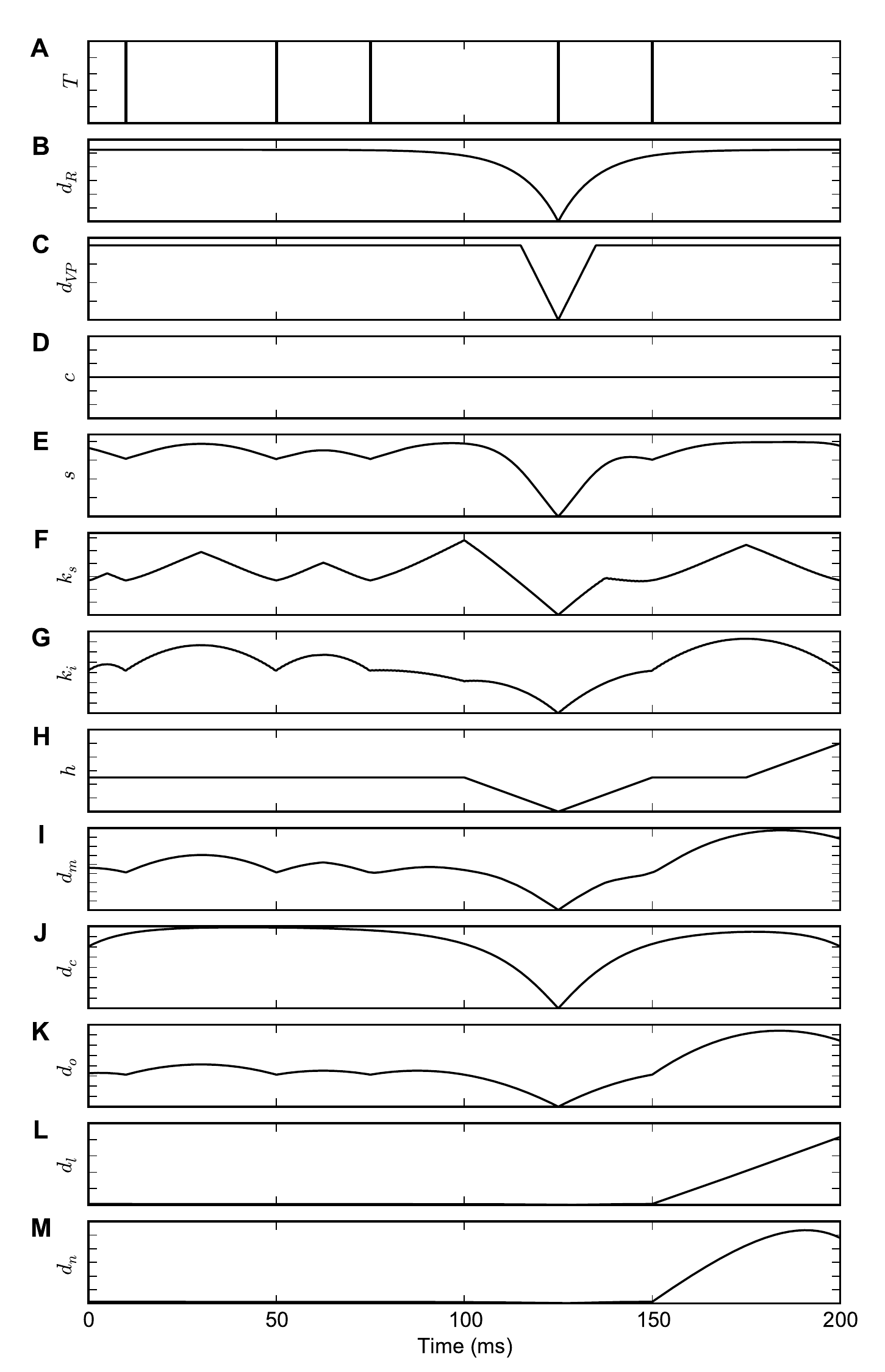}
\end{center}
\caption{Metric comparison - shifting a spike. We computed the distance between the spike train $T= \{ 10, 50, 75, 125, 150\}$ ms and one obtained from this spike train by shifting the spike at $t^{(4)}=125$ ms. The spike was shifted at timings $x \in [0, 200]$ ms to generate $\bar{T}$ and the distance between $T$ and $\bar{T}$ was computed and plotted against $x$. (A) The spike train $T$. (B) The van Rossum distance. (C) The Victor \& Purpura distance. (D) The spike count distance. (E) The Schreiber et al. distance. (F) The Kreuz et al. improved SPIKE-distance. (G) The Kreuz et al. ISI-distance. (H) The Pompeiu-Hausdorff distance. (I) The max-metric. (J) The convolution max-metric. (K) The modulus-metric. (L) The localized max-metric. (M) The localized modulus-metric.}
\label{fig:a2}
\end{figure}

\subsection{Bursts}
We generated a template spike train $T$
containing three bursts and one isolated spike. We computed, in 6 setups, using various metrics, the distance between $T$ and another spike train $\bar{T}$ obtained from $T$ by either: removing one spike from a burst;  inserting one spike into a burst; or removing or inserting one spike not belonging to the bursts. After computing the distances for each of the setups, the distances for each metric were normalized to the maximum distance for that particular metric among the setups. The normalized distances are shown, for each setup, in Figure \ref{fig:bursts}. When a spike is added to a burst or removed from a burst, the max-metric and modulus-metric distances, as well as the Kreuz et al. distances, are close to zero. Those distances become non-negligible when a solitary spike, far from a burst, is removed, or a new spike is added far from a burst. The normalized convolution max-metric and the van Rossum distances are close to 1, and the Victor \& Purpura distance is exactly 1, in all setups. The Schreiber et al. distance exhibited an intermediate behavior, but it also remained non-negligible in all setups.

\begin{figure}[h!p]
\hfill
\begin{center}
\includegraphics[width=\textwidth]{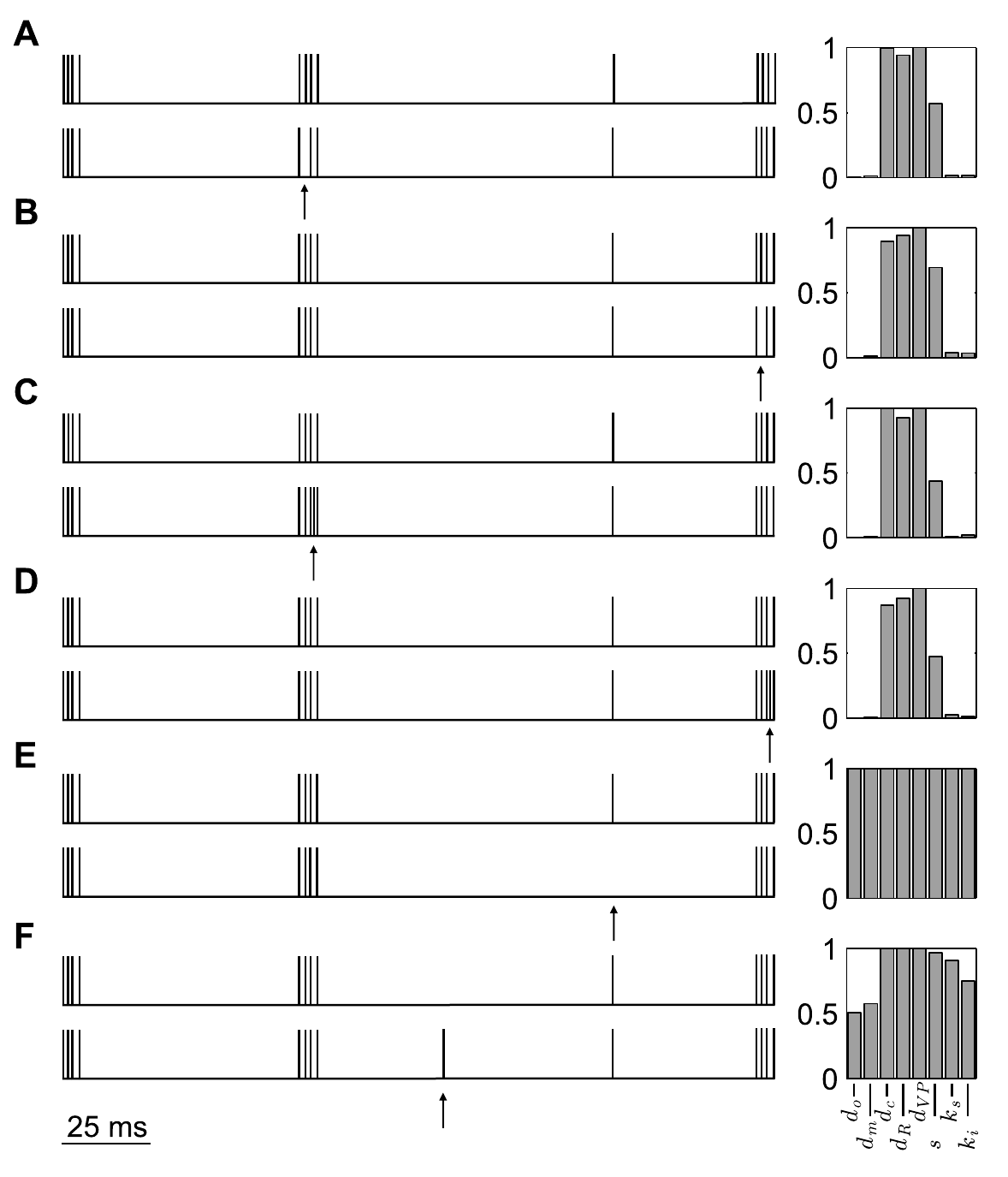}
\end{center}
\caption{Dealing with bursts. We measured distances between a template spike train that includes bursts and a spike train obtained from the template by adding or removing one spike. Arrows indicate where the spike was added or removed. Left: the pair of spike trains. Right: the normalized distances ($d_o$ - modulus metric; $d_m$ - max metric; $d_c$ - convolution max-metric; $d_R$ - van Rossum; $d_{VP}$ - Victor \& Purpura; $s$ - Schreiber et al.; $k_s$ - Kreuz et al. improved SPIKE-distance; $k_i$ - Kreuz et al. ISI-distance). (A), (B) One spike removed from a burst. (C), (D) One spike inserted into a burst. (E) One isolated spike removed. (F) One isolated spike inserted.}
\label{fig:bursts}
\end{figure}

\subsection{Discriminating timing precision vs. event reliability}
Timing precision and event reliability are distinct characteristics of the variation of a spike train \cite{TFS08}. In computational neuroscience there is a need for spike train metrics to both characterize the overall variability as well as for parsing out the precision and reliability separately \cite{toups2012}. We investigated how our new metrics and existing metrics perform in discriminating precision vs. event reliability \cite{dauwels2009}. We generated a 200 ms template spike train $T$ using a Poisson process with a rate of 100 Hz. We also generated various spike trains $\bar{T}$ obtained from $T$ by both: (a) applying to spikes from $T$ a Gaussian jitter with zero mean and variance $\sigma$ in the range of 0 and $\sigma_m=20$ ms and (b) removing spikes with a probability $p$ ranging from 0 to $p_m=80 \%$. For each $(\sigma, p)$ pair we ran 100 trials, where in each trial we generated randomly a new $\bar{T}$. For each trial, we computed the distances between $T$ and $\bar{T}$ using various metrics and we then averaged the results over the trials. For each metric, average distances were then normalized to the maximum average distance across the $(\sigma, p)$ parameters. The results are presented in Figs. \ref{fig:jitter_removal_1}--\ref{fig:jitter_removal_4}. Fig. \ref{fig:jitter_removal_1} represents the normalized average distances as a function of $\sigma$ and $p$. Figs. \ref{fig:jitter_removal_2} and \ref{fig:jitter_removal_3} represent sections trough the graphs in Fig. \ref{fig:jitter_removal_1}, for illustrating more clearly the dependencies of distances on reliability and precision.

If Fig. \ref{fig:jitter_removal_4}, we tried to illustrate how reliability compares to the precision in determining the variability of the distances. For a metric $\rho$, it is not possible to compare directly the partial derivatives $\partial \rho/\partial p$ and $\partial \rho/\partial \sigma$ because they are variables of different physical dimensions: $p$ is dimensionless, while $\sigma$ represents a time interval. In order to compare them, we take into consideration that the relevant intervals on which $p$ and $\sigma$ vary are practically bounded. Choosing $\sigma_m$ and $p_m$ is theoretically arbitrary, but in practice $p_m$ is something slightly less than 1 and the relevant $\sigma_m$ is constrained by the time scales of the considered spike trains and of the distances that depend on time-scale-like parameters. We use these practical bounds in order to commensurate the variance with respect to the two parameters.  By dividing the intervals $[0, p_m]$ and $[0, \sigma_m]$ into the same number of bins $N$ we get two-dimensional pixels of size $\delta p=p_m/N$ and $\delta \sigma=\sigma_m/N$ on which we can consider that the comparison of the variation of $d$ along the two axes can be compared. We computed
\begin{equation}
\delta \rho (p, \sigma) = \frac{\partial \rho(p, \sigma) }{\partial p} \; \delta p - \frac{\partial \rho(p, \sigma) }{\partial \sigma} \; \delta \sigma,
\end{equation}
where $\rho$ was replaced with the considered metrics. In areas of the $p$, $\sigma$ space where $\delta \rho$ is positive, we may say that the metric is more sensitive to reliability, while in areas where it is negative the metric is more sensitive to the precision of spikes. The interpretation of results should take into consideration the caveat that changes of the $\sigma_m/p_m$ ratio may change the sign of $\delta \rho$.

As expected, the Victor \& Purpura distance with $q=0.001$ ms$^{-1}$ does not depend at all on the precision, but just on reliability, since it basically counts the difference in the number of spikes between the spike trains. Our modulus-metric and max-metrics have a stronger dependence on reliability than on precision, on most of the considered range, except for high reliability (very low $p$) where the dependence on precision is still dominant. For the van Rossum and Schreiber et al. distances, the dependence on reliability and precision is somehow balanced. The Kreuz et al. distances and the Victor \& Purpura distance with $q=0.2$ ms$^{-1}$ have a stronger dependence on precision than on reliability. For all distances, except for the Victor \& Purpura distance with $q=0.001$ ms$^{-1}$, the sensitivity to reliability increases with the unreliability (with the probability $p$ of spikes not being fired). These results may be different for a different choice of the $\sigma_m/p_m$ ratio, of the time scales of the considered spike trains and of the parameters of the parameter-dependent distances. Our results are similar to the ones of \citet{dauwels2009} for the Victor \& Purpura distance $q=0.001$ ms$^{-1}$, but different for the Schreiber et al. distance, probably because \citet{dauwels2009} used a different approach for modeling unreliability of spikes.

\begin{figure}
	\centering
	\includegraphics[width=0.9 \textwidth]{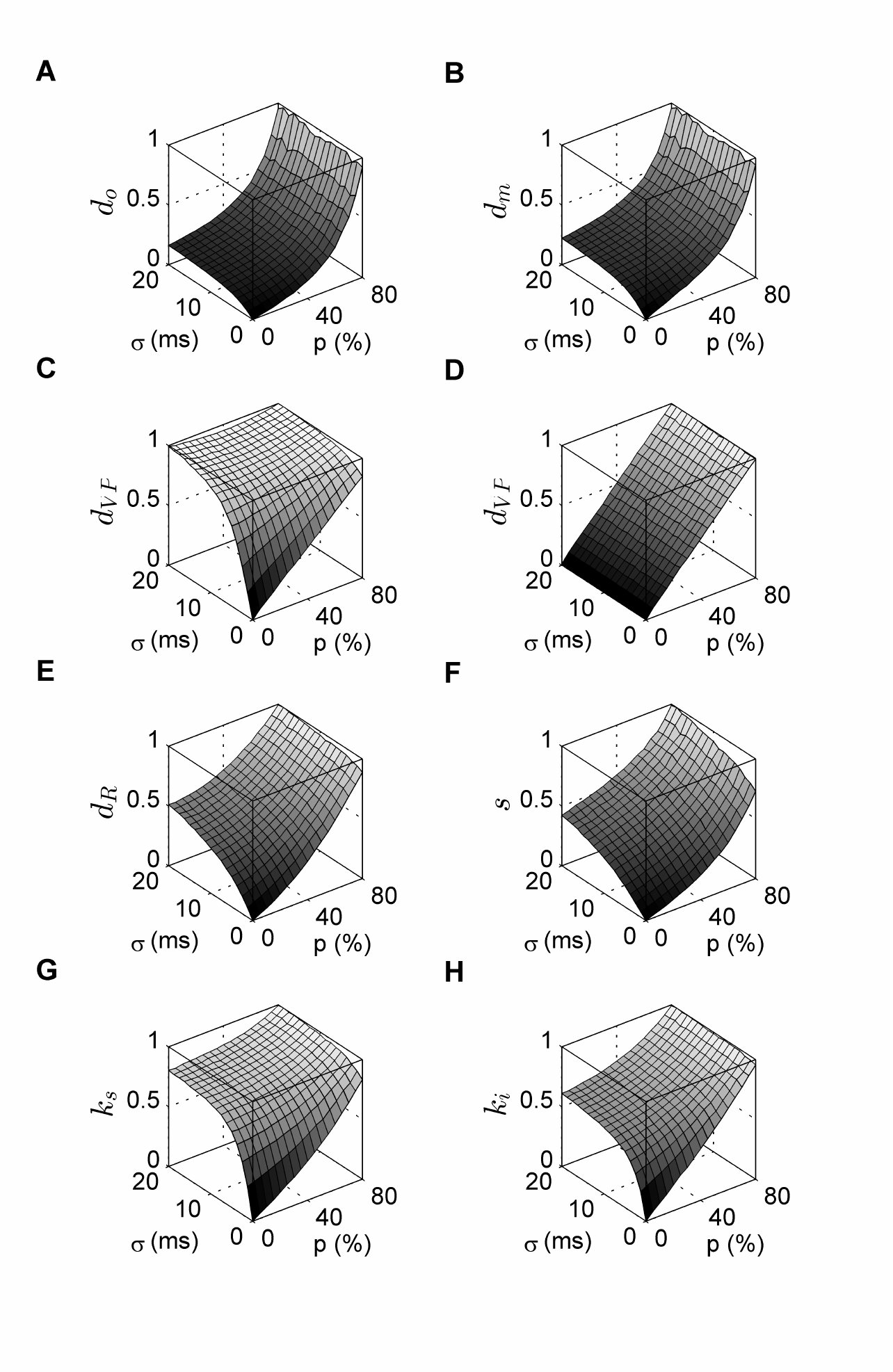}
	\caption{Normalized average distances between a template spike train and another one generated from it, as a function of the probability $p$ of removing spikes from the template spike train and of jitter $\sigma$ added to spikes. Distances are normalized to their maximum over the considered range. (A) Modulus-metric. (B) Max-metric. (C) Victor \& Purpura, $q=0.2$ ms$^{-1}$. (D) Victor \& Purpura, $q=0.001$ ms$^{-1}$. (E) Van Rossum. (F) Schreiber et al. (G) Kreuz et al. improved SPIKE-distance. (H) Kreuz et al. ISI-distance.}
\label{fig:jitter_removal_1}
\end{figure}
\begin{figure}
	\centering
	\includegraphics[width=0.9 \textwidth]{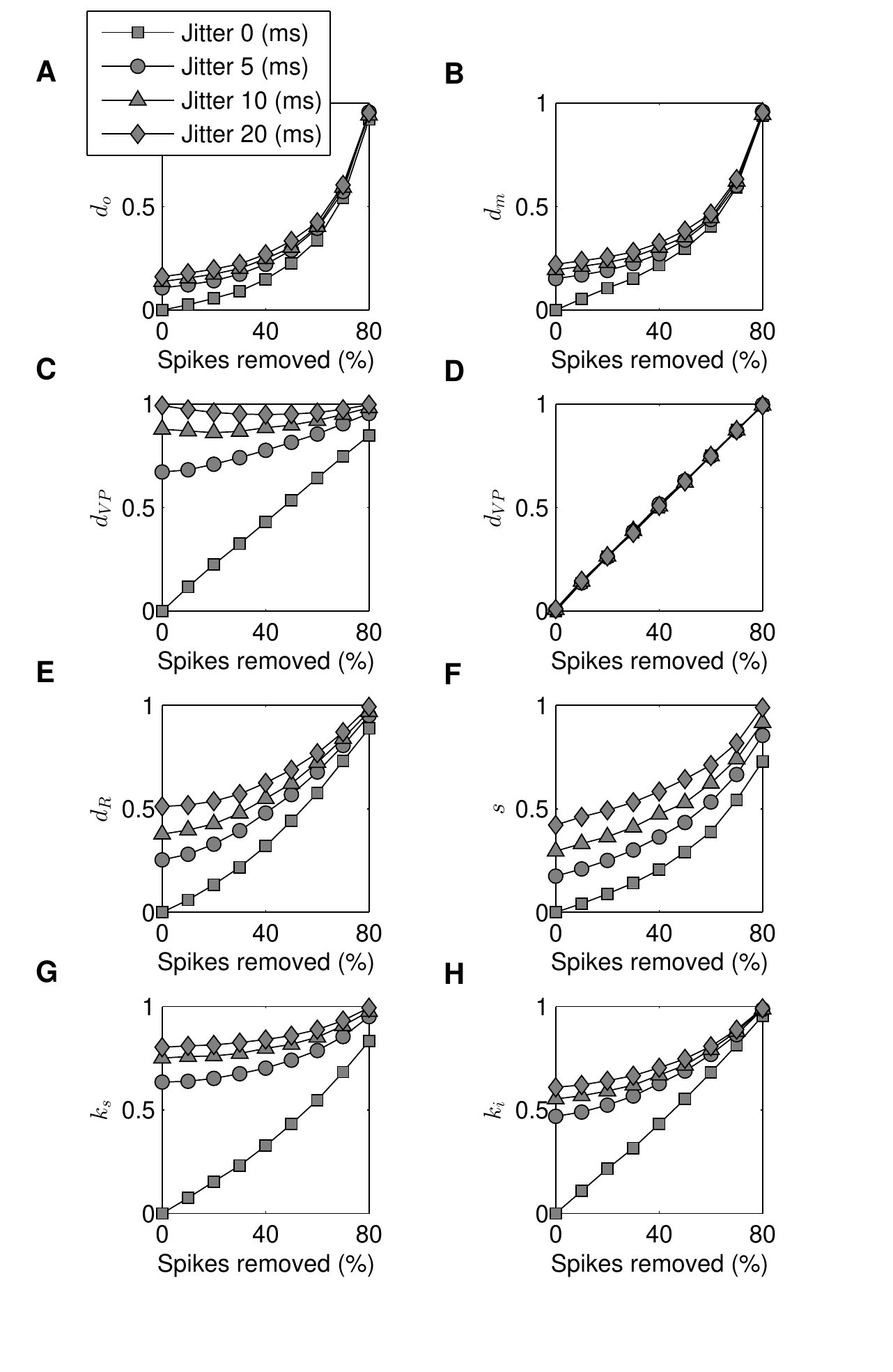}
	\caption{Normalized average distances as a function of the probability of removing spikes $p$, for several levels of jitter $\sigma$ (sections trough the panels in Fig. \ref{fig:jitter_removal_1}). (A) Modulus-metric. (B) Max-metric. (C) Victor \& Purpura, $q=0.2$ ms$^{-1}$. (D) Victor \& Purpura, $q=0.001$ ms$^{-1}$. (E) Van Rossum.  (F) Schreiber et al. (G) Kreuz et al. improved SPIKE-distance. (H) Kreuz et al. ISI-distance.}
\label{fig:jitter_removal_2}
\end{figure}
\begin{figure}
	\centering
	\includegraphics[width=0.9 \textwidth]{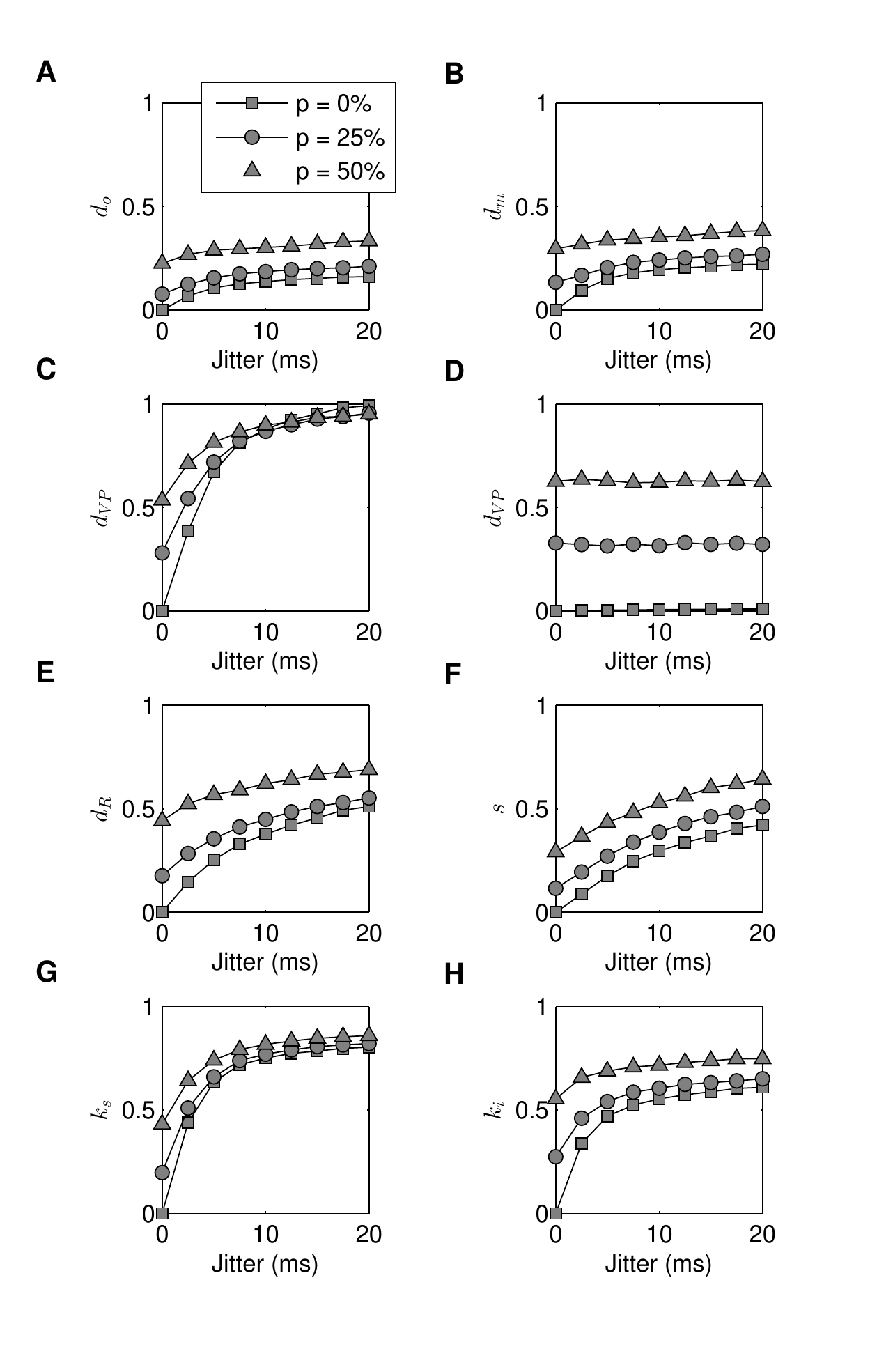}
	\caption{Normalized average distances as a function of jitter $\sigma$, for several levels of the probability of removing spikes $p$  (sections trough the panels in Fig. \ref{fig:jitter_removal_1}). (A) Modulus-metric. (B) Max-metric. (C) Victor \& Purpura, $q=0.2$ ms$^{-1}$. (D) Victor \& Purpura, $q=0.001$ ms$^{-1}$. (E) Van Rossum.  (F) Schreiber et al. (G) Kreuz et al. improved SPIKE-distance. (H) Kreuz et al. ISI-distance.}
\label{fig:jitter_removal_3}
\end{figure}
\begin{figure}
	\centering
	\includegraphics[width=0.9 \textwidth]{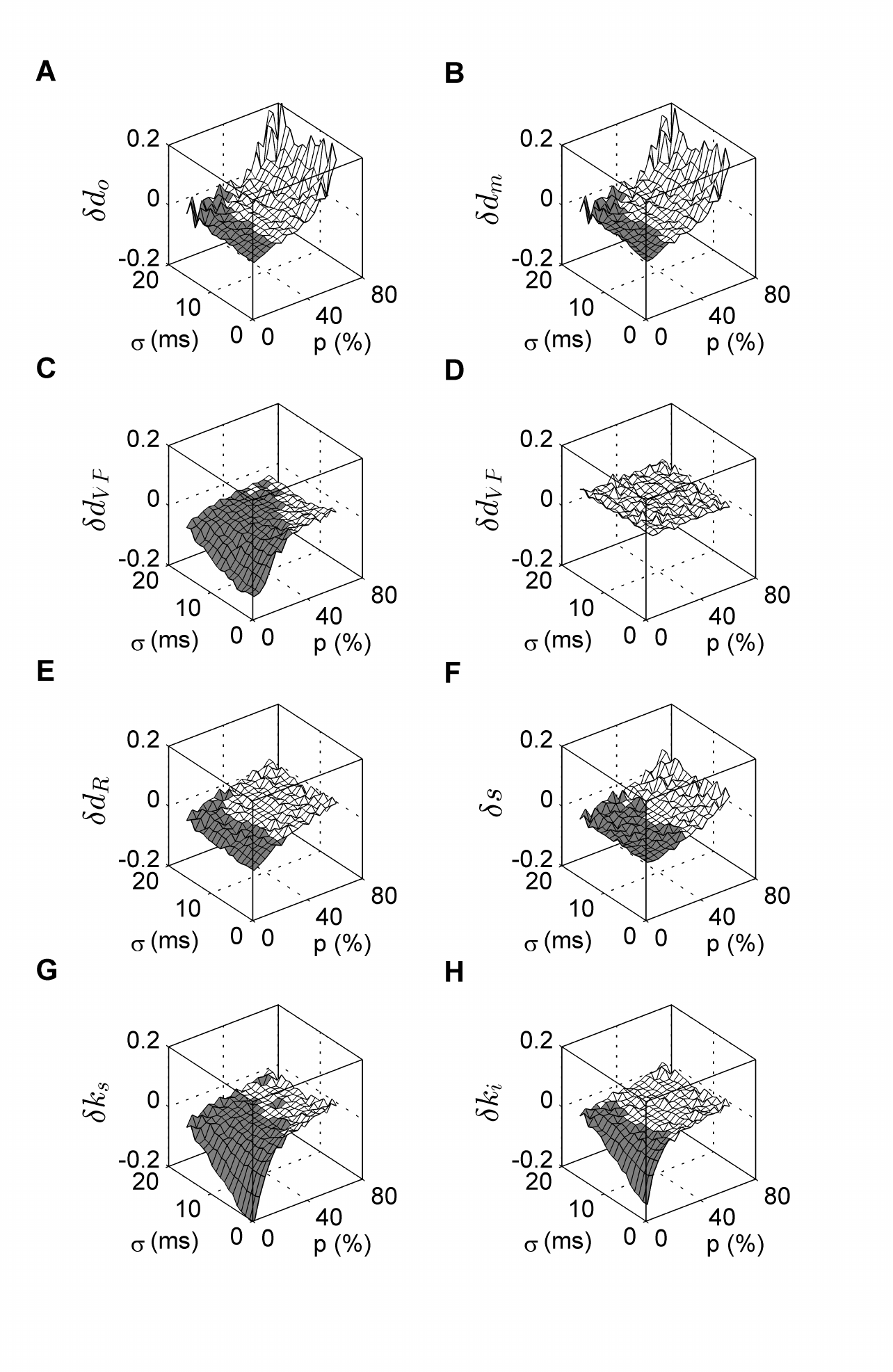}
	\caption{Illustration of the dominance of reliability over precision for determining the variability of distances (see main text). In white areas, the distance varies more with respect to reliability than with respect to precision. In dark areas, the dominance is reversed. (A) Modulus-metric. (B) Max-metric. (C) Victor \& Purpura, $q=0.2$ ms$^{-1}$. (D) Victor \& Purpura, $q=0.001$ ms$^{-1}$. (E) Van Rossum.  (F) Schreiber et al. (G) Kreuz et al. improved SPIKE-distance. (H) Kreuz et al. ISI-distance.}
\label{fig:jitter_removal_4}
\end{figure}

\subsection{Correlations}
We explored the correlation between the newly introduced metrics and the classical Victor \& Purpura and van Rossum distances. We generated a 500 ms Poisson spike train with a firing rate of 20 Hz. From this spike train, we generated a new one by adding a Gaussian jitter with zero mean and 20 ms variance. We considered only generated and jittered spike trains that contained 10 spikes. We then measured the distance between the original and the jittered spike train using various metrics. We repeated this in 1,000 trials, where for each trial the original spike train and the jitter were generated randomly. For each metric, distances were normalized to the mean value across samples. The results are displayed in Figure \ref{fig:s1}. Table \ref{tab:vr_cor_coef} shows the correlation coefficients between the max-metric, the modulus-metric, the convolution max-metric, the Schreiber et al. distance, the Kreuz et al. distances and, respectively, the van Rossum and Victor \& Purpura distances.

\begin{figure}[h!p]
\hfill
\begin{center}
\includegraphics[width=\textwidth]{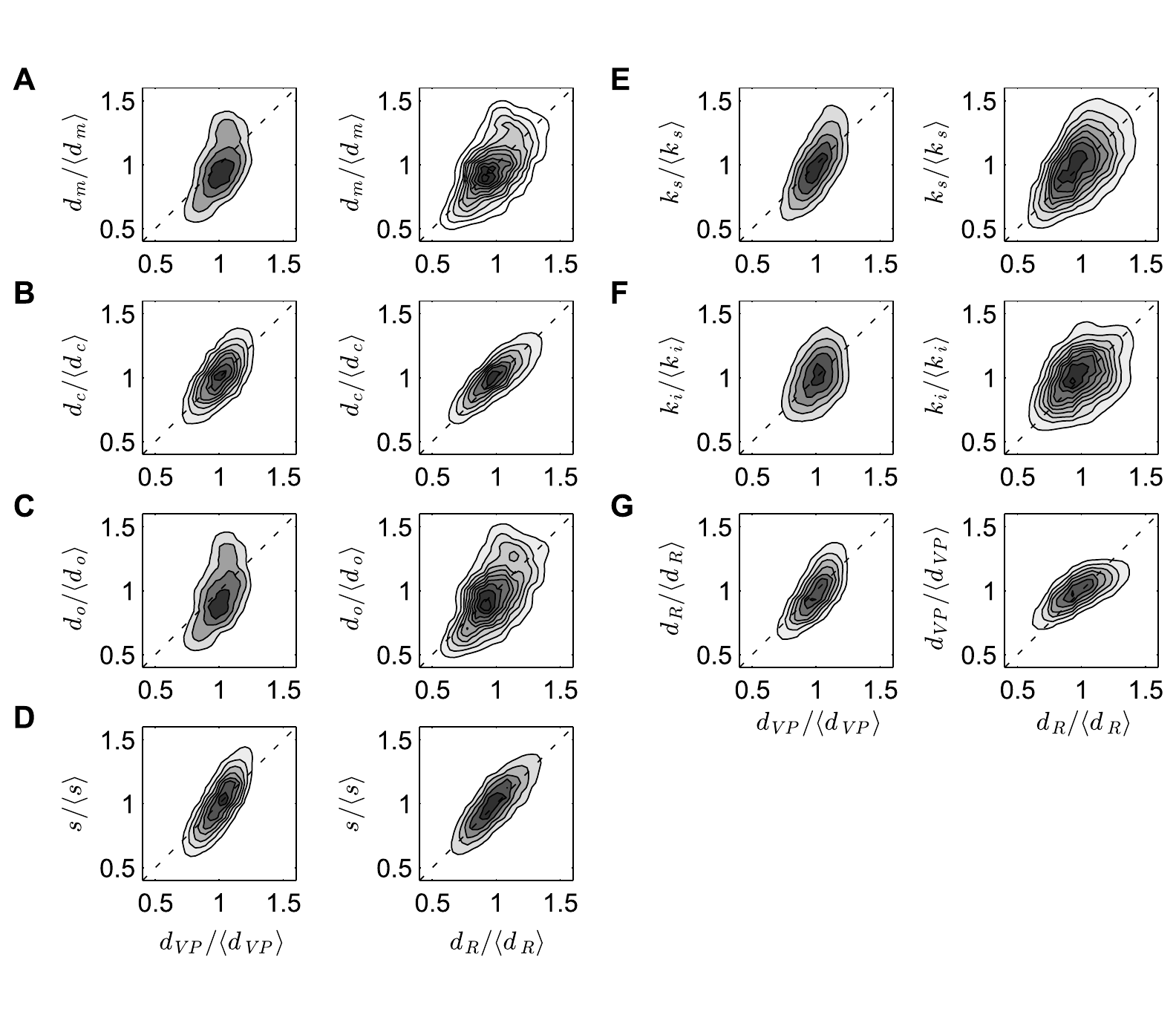}
\end{center}
\caption{Correlations between the Victor \& Purpura (left column) and the van Rossum (right column) metrics and, respectively, the distances computed with our newly introduced metrics, the Schreiber et al. and the Kreuz et al. metrics. We computed distances between Poisson spike trains and a jittered version of them. Figures represent the distribution of pairs of the normalized values of distances measured with the different metrics. (A) Max-metric. (B) Convolution max-metric. (C) Modulus-metric. (D) Schreiber et al. (E) Kreuz et al. improved SPIKE-distance. (F) Kreuz et al. ISI-distance. (G) Victor \& Purpura / van Rossum.}
\label{fig:s1}
\end{figure}

\begin{table}[h]
\caption{Correlation coefficients between the van Rossum and Victor \& Purpura distances and, respectively, other distances, computed from data presented in Figure \ref{fig:s1}.}
\begin{tabular}{|l|r|r|}
\hline
 Distance & $d_R$ correlation coefficient & $d_{VP}$ correlation coefficient\\
\hline
  $d_m$ & 0.54 & 0.48\\
\hline
  $d_o$ & 0.54 & 0.47\\
\hline
  $d_c$ & 0.84 &  0.81\\
\hline
  $s$ & 0.87 &  0.88\\
\hline
  $k_s$ & 0.51 &  0.70\\
\hline
  $k_i$ & 0.43 &  0.44\\
\hline
  $d_R$ & 1.00 &  0.78\\
\hline
  $d_{VP}$ &  0.78 & 1.00 \\
\hline
\end{tabular}
\label{tab:vr_cor_coef}
\end{table}

\subsection{Computation speed}
We computed the distances between pairs of randomly generated spike trains, the spike trains within a pair having the same number of spikes $n$. We varied $n$ from 5 to 500 while keeping constant the firing rate of the spike trains. The trains were generated by randomly choosing the $n$ firing times from a uniform distribution between 0 and $n\;T$, where T=35 ms. We measured the average time needed to compute distances, as a function of the number of spikes. Performance was measured using C++ implementations of the metrics, running on a Intel Core 2 processor. The Victor \& Purpura metric was computed using the algorithm in \cite{VP96}. The van Rossum metric was implemented using the exact, optimized algorithm presented in \cite{HK12} (A1) and a discrete-time integration with a timestep of 1 ms (A2), which turned out to be slightly faster than the optimized one. The Schreiber et al. and the Kreuz et al. metrics have also been computed using a 1 ms integration timestep. The modulus-metric was implemented using Algorithms 1 and 2. The code used for all metrics is freely available at \url{https://github.com/modulus-metric/}. The results are presented in Figure \ref{fig:speed}. In panel (A), the results were averaged over 1,000 trials, while in the other panels the results were averaged over 10,000 trials.

The simulations showed that the max-metric and the Schreiber et al. metric are relatively slow to compute. Those two metrics and the Victor \& Purpura metric require a computation time that grows more than linearly with the number of spikes. The other metrics have an approximately linear dependence on the number of spikes; we fitted them with a line and computed the proportionality coefficients in Table \ref{tab:speed}. The fastest distances or algorithms were, in order: modulus-metric (Algorithm 2); van Rossum (A2); Kreuz et al. ISI-distance; van Rossum (A1); modulus-metric (Algorithm 1); Kreuz et al. improved SPIKE-distance. It should be noted that, while the modulus-metric algorithms and the van Rossum A1 algorithm compute the distances exactly (within machine numerical precision), the other linear-time algorithms compute numerical approximations of the distances through discrete-time integration, with a precision that depends on the integration timestep.

 \begin{figure}
	\centering
	\includegraphics[width=\textwidth]{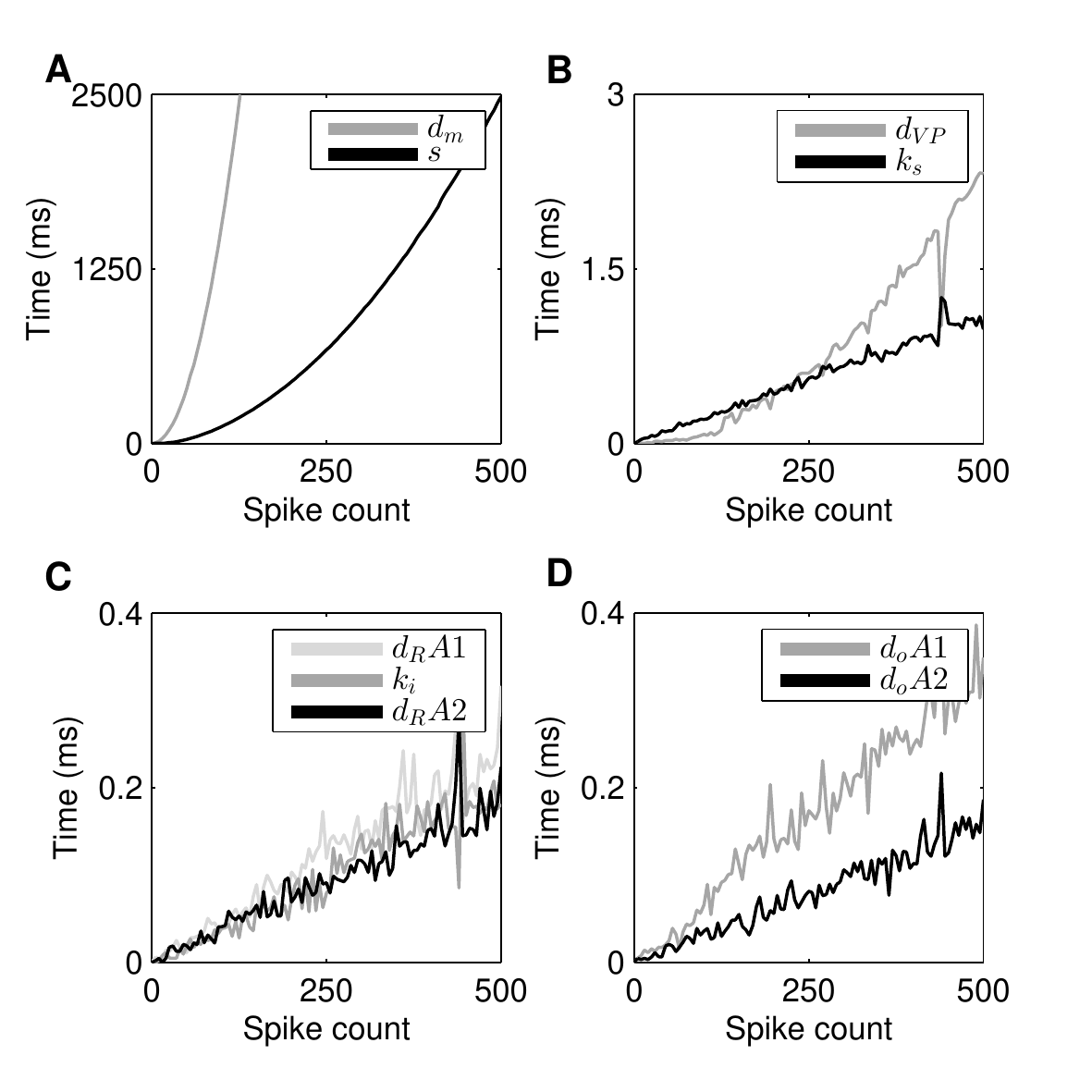}
	\caption{Time needed for computing various distances, as a function of the number of spikes in the spike trains. Note the different time scales. (A) $d_m$: max-metric; $s$: Schreiber et al. (B)  $d_{VP}$: Victor \& Purpura; $k_s$: Kreuz et al. improved SPIKE-distance. (C) $d_R$: van Rossum; A1: algorithm by \citep{HK12}; A2: discrete-time integration; $k_i$ - Kreuz et al. ISI-distance. (D) $d_o$: modulus metric; A1: Algorithm 1; A2: Algorithm 2.}
\label{fig:speed}
\end{figure}

\begin{table}[h]
\caption{Average computing time per spike in a spike train, for various distances. Notation and data as in Figure \ref{fig:speed}.}
\begin{tabular}{|l|r|}
\hline
 Distance (and algorithm) & Computing time per spike (ms)\\
\hline
  $d_o$ A2& 0.001570\\
\hline
  $d_R$ A2 & 0.001831\\
\hline
  $k_i$ & 0.001965\\
\hline
  $d_R$ A1 & 0.002437\\
\hline
  $d_o$ A1 & 0.003323\\
\hline
  $k_s$ & 0.010910\\
\hline
\end{tabular}
\label{tab:speed}
\end{table}

\section{Discussion}

The max-metric and the modulus-metric behave in a qualitatively different way than the classical van Rossum and Victor \& Purpura distances (Figures \ref{fig:a1}-\ref{fig:bursts}), but similar to the Kreuz et al. distances. Within a set of spike trains that are considered by the van Rossum and Victor \& Purpura metrics to be at equal distance from a reference spike train, the max-metric, the modulus-metric and the Kreuz et al. metrics can distinguish a range of distances that reflect similarities in the structure of the compared spike trains (Figures \ref{fig:a1}, \ref{fig:a2}). When comparing spike trains that include bursts, the max-metric, the modulus-metric and the Kreuz et al. metrics ignore differences in the number and position of spikes inside the bursts, while this kind of differences are considered by the van Rossum and Victor \& Purpura metrics as significant as in the case that differences exist outside the bursts (Figure \ref{fig:bursts}). This makes the max-metric, the modulus-metric  and the Kreuz et al. metrics particularly suitable for measuring distances in spike trains where information is encoded in the timing of bursts or solitary spikes, but not in the internal structure of bursts. This is the case in some experiments, such as \cite{reinagel1998}, where bursts regarded as unitary events encoded more information per event than otherwise, or such as \cite{kepecs2003}, where the timing of the first burst spike carried 70\% of the information and the spike count only 22\% of the information. However, in other cases the internal structure of bursts does carry information \cite{krahe2004}. The Schreiber et al. metric has a behaviour that is intermediate between the one of the max-metric, the modulus-metric and the Kreuz et al. metrics, on one side, and the van Rossum and Victor \& Purpura metrics, on another side.

While the max-metric depends on a kernel $\mathcal{H}$ which can be particularized to cause distinct behaviors, and the van Rossum, Victor \& Purpura and Schreiber et al. distances also depend on parameters that must be chosen by their users, the modulus-metric does not depend on any parameters, similarly to the \citet{kreuz2009,kreuz2011,kreuz2013} metrics. The lack of parameters allows one to start analyzing data immediately, without the need of preprocessing it in order to find the appropriate parameters. A parameter-free distance also gives a more objective measure, that does not depend on any assumptions to be made by the experimenter. In some cases, when the time scales vary during an experiment, no single time scale may characterize the spike trains, and thus a time-scale-independent measure may be preferable.

We have also shown that the modulus-metric can be computed faster than any of the other considered metrics, through an algorithm that operates in a time that depends linearly on the number of spikes in the considered spike trains. This fast algorithm computes the distance exactly (within machine precision), not as a nu\-me\-ri\-cal\-ly-approximated discrete-time integration, as some algorithms for other metrics do.

The convolution-metric that we introduced, although analytically similar to the max-metric, is qualitatively similar to the van Rossum distance. A qualitative difference between the convolution-metric and the van Rossum distance appears when the differences between the spike trains are localized near the ends of the integration interval, and this is a simple consequence of the difference between the bounded integration interval for the convolution-metric and the infinite integration interval for the van Rossum distance.

We have considered only spike trains having non-overlapping spikes. If we relax this constraint, for our newly introduced distances, with the exception of the convolution max-metric, we get a zero distance between a spike train and a second one generated from the first by adding an extra spike to the first, overlapping an existing spike (Figure \ref{fig:a1}). This is due to the distance $d$ between an arbitrary timing and a spike train in Eq. \ref{equation_distance_d} which does not distinguish between overlapping spikes in a train. Thus, if we relax the constraint of not allowing non-overlapping spikes, these distances become pseudo-metrics because there might be a zero distance between two spike trains that differ through overlapping spikes. However, the case of overlapping spikes is biologically implausible if we consider spike trains fired by single neurons. If it is enough that the distances are pseudo-metrics, we may also relax some of the conditions of the kernels, such as the requirement for $\mathcal{L}$ to be strictly positive on $(0, b-a]$ or the conditions on $\mathcal{K}$.

For our metrics, when the integrating interval $[a, \; b]$ extends beyond the interval covered by extremes of the spike trains, e.g. $[\min(t^{(1)}, \bar{t}^{(1)}), \; \max(t^{(n)}, \bar{t}^{(\bar{n})})]$ for a pair of spike trains, the result of the integration in the area not covered by extremes of the spike trains adds to the distance without contributing information about the spike trains. Thus, the integrating interval should preferably be chosen as the interval covered by the extremes of the considered set of spike trains. Alternatively, one may artificially add to all considered spike trains two extra spikes at the extremities $a$ and $b$ of the integrating interval, a procedure that is also used by the Kreuz et al. metrics \cite{kreuz2013}.

We have introduced localized versions of our metrics, which, depending on the localization kernel  $\mathcal{L}$, could have a biologically-relevant interpretation of measuring the differences between two spike trains as they are perceived at a particular moment in time by a neuron receiving these spike trains.

\section{Conclusion}
We have introduced here a new class of spike train metrics, inspired by the Pompeiu-Hausdorff distance. The max-metric and the modulus-metric behave in a qualitatively different way than classical metrics and are particularly suitable for measuring distances in spike trains where information is encoded in the identity of bursts, as unitary events. The modulus-metric does not depend on any parameters and can be computed faster than other distances, in a time that depends linearly on the number of spikes in the compared spike trains.

\subsection*{Acknowledgements}
We acknowledge the useful suggestions of Adriana Nicolae and Ovidiu Jurju\c{t}. This work was funded by the Sectorial Operational Programme Human Resources Development (POSDRU, contract no. 6/1.5/S/3, "Doctoral studies: through science towards society"), by a grant of the Romanian National Authority for Scientific Research (PNCDI II, Parteneriate, contract no. 11-039/2007) and by the Max Planck - Coneural PartnerGroup.

\section*{Appendix}

\appendix
%
%
\section{Equivalent Hausdorff form}\label{section_equiv_haus}
\begin{proposition}
The Pompeiu-Hausdorff metric
\begin{equation}\label{hausb}
h(T,\bar{T})  = \sup_{x \in \mathbb{R}} \left| d(x, T)-d(x, \bar{T}) \right|
\end{equation}
can be equivalently expressed as
\begin{equation}
 h(T,\bar{T}) = \sup_{x \in [a, \, b]}  \left| d(x,T) - d(x,\bar{T}) \right|.
 \end{equation}
\end{proposition}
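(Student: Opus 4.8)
The plan is to prove the two suprema coincide by a direct pointwise comparison, using the fact that outside $[a,b]$ the function $x \mapsto |d(x,T) - d(x,\bar{T})|$ is constant on each of the two unbounded rays. Since $[a,b] \subseteq \mathbb{R}$, the inequality $\sup_{x \in [a,b]} |d(x,T) - d(x,\bar{T})| \le \sup_{x \in \mathbb{R}} |d(x,T) - d(x,\bar{T})|$ is immediate, so the whole content is the reverse inequality, for which it suffices to control the values of the integrand for $x < a$ and for $x > b$.

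First I would treat the left ray. Let $t^{(1)}$ and $\bar{t}^{(1)}$ be the first spikes of $T$ and $\bar{T}$ (these exist since the spike trains are nonempty). For any $x \le a$ we have $x \le a \le t^{(i)}$ for every $i$, hence $|t^{(i)} - x| = t^{(i)} - x$ and $d(x,T) = \inf_i |t^{(i)} - x| = t^{(1)} - x$; likewise $d(x,\bar{T}) = \bar{t}^{(1)} - x$. Therefore $d(x,T) - d(x,\bar{T}) = t^{(1)} - \bar{t}^{(1)}$, which does not depend on $x$ and in particular equals $d(a,T) - d(a,\bar{T})$. Consequently $\sup_{x \le a} |d(x,T) - d(x,\bar{T})| = |d(a,T) - d(a,\bar{T})| \le \sup_{x \in [a,b]} |d(x,T) - d(x,\bar{T})|$.

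Next I would argue symmetrically on the right ray: for $x \ge b$ we have $|t^{(i)} - x| = x - t^{(i)}$, so $d(x,T) = x - t^{(n)}$ and $d(x,\bar{T}) = x - \bar{t}^{(\bar{n})}$, whence $d(x,T) - d(x,\bar{T}) = \bar{t}^{(\bar{n})} - t^{(n)}$ is again constant in $x$ and equals its value at $x = b$, giving $\sup_{x \ge b} |d(x,T) - d(x,\bar{T})| \le \sup_{x \in [a,b]} |d(x,T) - d(x,\bar{T})|$. Combining the three regimes, $\sup_{x \in \mathbb{R}} |d(x,T) - d(x,\bar{T})| = \max\{\sup_{x \le a}, \sup_{x \in [a,b]}, \sup_{x \ge b}\} = \sup_{x \in [a,b]} |d(x,T) - d(x,\bar{T})|$, which is the claim; comparing with Eq.~\ref{equation_hausdorffi} via Eq.~\ref{equation_hausdorff_modulus} identifies this common value with $h(T,\bar{T})$.

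There is essentially no hard step here — the argument is elementary once one observes that $d(\cdot,T)$ is affine with slope $-1$ on $(-\infty,a]$ and slope $+1$ on $[b,\infty)$, so the difference $d(\cdot,T)-d(\cdot,\bar{T})$ is constant there. The only point requiring a little care is making sure the case analysis is exhaustive and that the endpoints $a$ and $b$ are correctly absorbed into the interval case (which they are, since the rays are closed at $a$ and $b$ and the ray values match the interval values at those endpoints). Nonemptiness of $T$ and $\bar{T}$ is used implicitly to guarantee that $t^{(1)}, \bar{t}^{(1)}, t^{(n)}, \bar{t}^{(\bar{n})}$ are well defined.
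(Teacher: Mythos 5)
Your proof is correct and follows essentially the same route as the paper's: you split $\mathbb{R}$ into $(-\infty,a]$, $[a,b]$, $[b,\infty)$, observe that $d(x,T)-d(x,\bar{T})$ is constant on each unbounded ray (equal to $t^{(1)}-\bar{t}^{(1)}$ on the left and $\bar{t}^{(\bar{n})}-t^{(n)}$ on the right) and hence equals its value at the corresponding endpoint, which is dominated by the supremum over $[a,b]$. This matches the paper's argument step for step.
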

\begin{proof}
The result follows by showing that the global supremum is achieved on the interval $[a,b]$.
We need to show that restricting the interval in Eq. \ref{hausb} is acceptable since the supremum is achieved on the interval $[a,b]$.

We first consider $x\in (-\infty, a]$. Because $\forall \; t \in T$, $x \leq a \leq  t^{(1)} \leq t$, we have $d(x, T)=t^{(1)} -x$ and $d(a, T)=t^{(1)} -a$. Because $\forall \; \bar{t} \in \bar{T}$, $x \leq a \leq \bar{t}^{(1)} \leq \bar{t}$, we have $d(x, \bar{T})=\bar{t}^{(1)}-x$ and $d(a, \bar{T})=\bar{t}^{(1)}-a$. Thus
\begin{align}
\sup_{x \in (-\infty, a]}  \left| d(x,T) - d(x,\bar{T}) \right|&=\sup_{x \in (-\infty, a]}  \left| (t^{(1)}-x) - (\bar{t}^{(1)}-x) \right| = \left|  t^{(1)}-\bar{t}^{(1)} \right|\\
\sup_{x \in (-\infty, a]}  \left| d(x,T) - d(x,\bar{T}) \right|&=\left| d(a,T) - d(a,\bar{T}) \right|. \label{equation_equivalent_1}
\end{align}

Analogously, for $x \in [b, \infty)$ we have
\begin{equation}
\sup_{x \in [b, \infty)}  \left| d(x,T) - d(x,\bar{T}) \right|=\left| d(b,T) - d(b,\bar{T}) \right|.\label{equation_equivalent_2}
\end{equation}

From Eqs. \ref{equation_equivalent_1} and \ref{equation_equivalent_2},
\begin{equation}
\sup_{x \in \mathbb{R}}  \left| d(x,T) - d(x,\bar{T}) \right|=\sup_{x \in [a, \, b]}  \left| d(x,T) - d(x,\bar{T}) \right|.
\end{equation}
\end{proof}
%
%

\section{Analysis of the max-metric}\label{section_analysis_max_metric}

%
%
\begin{proposition}
$d_m(T,\bar{T}) < \infty$.
\end{proposition}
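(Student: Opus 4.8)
The plan is to reduce the claim to a one-line uniform bound on the integrand: since $[a,b]$ is bounded and the inner supremum is bounded above by a finite constant independent of $s$, the integral is finite.

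First I would record the elementary bound $|d(x,T) - d(x,\bar T)| \le b-a$ valid for all $x \in [a,b]$. Indeed, for any spike $t^{(i)} \in T$ and any $x \in [a,b]$ we have $|t^{(i)}-x| \le b-a$, so $0 \le d(x,T) \le b-a$, and the same holds for $\bar T$; the difference of two numbers in $[0,b-a]$ lies in $[-(b-a),\,b-a]$. (Alternatively this is immediate from Eq.~\ref{equation_hausdorff_ab}, which shows the difference is at most $h(T,\bar T) \le b-a$.) Second, whenever $s,x \in [a,b]$ the argument $|s-x|$ lies in $[0,b-a]$, so the bound in Eq.~\ref{equation_h_conditions} gives $\mathcal{H}(|s-x|) < m$. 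Multiplying, for every fixed $s \in [a,b]$,
\[
\sup_{x \in [a,b]} \left\{ |d(x,T) - d(x,\bar T)|\, \mathcal{H}(|s-x|) \right\} \le (b-a)\, m .
\]

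It then remains to verify that the function $s \mapsto \sup_{x \in [a,b]}\{|d(x,T)-d(x,\bar T)|\,\mathcal{H}(|s-x|)\}$ is integrable on $[a,b]$, and here I would show it is in fact continuous. The maps $x \mapsto d(x,T)$ and $x \mapsto d(x,\bar T)$ are $1$-Lipschitz, being infima of the $1$-Lipschitz functions $x \mapsto |t-x|$, hence $g := |d(\cdot,T)-d(\cdot,\bar T)|$ is continuous and bounded on $[a,b]$; and $\mathcal{H}$, being continuous on the compact interval $[0,b-a]$, is uniformly continuous there. Therefore the family $\{\, s \mapsto g(x)\,\mathcal{H}(|s-x|)\,\}_{x \in [a,b]}$ is uniformly bounded and uniformly equicontinuous in $s$, and the pointwise supremum of such a family is continuous in $s$. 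Being continuous on $[a,b]$, the integrand is measurable and bounded, hence integrable, and
\[
d_m(T,\bar T) \le \int_a^b (b-a)\, m \; \mathrm{d}s = (b-a)^2\, m < \infty .
\]

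The whole argument is carried by the two displayed estimates; the only point that genuinely deserves to be spelled out is the integrability of the inner supremum as a function of $s$, which I would handle via the equicontinuity remark above. I do not anticipate any real obstacle.
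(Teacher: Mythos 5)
Your proof is correct and follows essentially the same route as the paper: bound the integrand uniformly by $(b-a)\,m$ using $|d(x,T)-d(x,\bar T)|\le h(T,\bar T)\le b-a$ and $\mathcal{H}<m$, then integrate over $[a,b]$ to get $(b-a)^2 m$. The only difference is that you additionally verify integrability of the inner supremum as a function of $s$ via a uniform-equicontinuity argument, a point the paper's proof passes over silently (and which its Lemma on continuity of such suprema only establishes later, under stronger hypotheses on the kernel); this is a legitimate and slightly more careful addition, not a different approach.
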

\begin{proof}
\noindent From Eq. \ref{equation_hausdorff_ab}, for every $x \in [a,b]$ we have
\begin{equation}
\left|d(x,T) - d(x,\bar{T})\right| \le h(T,\bar{T}).
\end{equation}
Multiplying by $\mathcal{H}(\cdot)$, which is positive, we obtain, $\forall s \in [a,b]$,
\begin{equation}
\left|d(x,T) - d(x,\bar{T})\right| \; \mathcal{H}(|s-x|) \le h(T,\bar{T}) \; \mathcal{H}(|s-x|).
\end{equation}
By taking the supremum and integrating, the equation above becomes
\begin{equation}
\int_{a}^{b} \sup_{x\in [a,b]} \left\lbrace \left|d(t,T) - d(t,\bar{T})\right| \; \mathcal{H}(|s-x|) \right\rbrace \textrm{d}s \le \int_{a}^{b} \sup_{x\in [a,b]} \left\lbrace h(T,\bar{T}) \; \mathcal{H}(|s-x|) \right\rbrace \textrm{d}s.
\end{equation}
The left side of the equation above is the max-metric $d_m$ (Eq. \ref{maxm}). Because $h(T,\bar{T})$ is independent of $s$ and $x$,
\begin{equation}
 d_m(T,\bar{T}) \le h(T,\bar{T})\int_{a}^{b} \sup_{x\in [a,b]}  \mathcal{H}(|s-x|) \textrm{d}s.
\end{equation}
Since $h(T,\bar{T}) \le b-a$ and $\mathcal{H}(y) < m$, $\forall y \in [0, b-a]$ (Eq. \ref{equation_h_conditions}), we obtain
\begin{equation}
 d_m(T,\bar{T}) \le (b-a)^2 \; m < \infty.
\end{equation}
\end{proof}

%
%
\begin{proposition}
$d_m : \mathcal{S}_{[a,b]} \times \mathcal{S}_{[a,b]} \to \mathbb{R}^+$ is a metric.
\end{proposition}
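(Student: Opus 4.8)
The plan is to check the four axioms of a metric: nonnegativity, symmetry, identity of indiscernibles, and the triangle inequality. Nonnegativity and symmetry are read off directly from the definition in Eq.~\ref{maxm}: the integrand $\sup_{x\in[a,b]}\{|d(x,T)-d(x,\bar{T})|\,\mathcal{H}(|s-x|)\}$ is a supremum of products of the nonnegative quantity $|d(x,T)-d(x,\bar{T})|$ with the positive kernel $\mathcal{H}(|s-x|)$, hence nonnegative, and it is invariant under interchanging $T$ and $\bar{T}$ because $|d(x,T)-d(x,\bar{T})| = |d(x,\bar{T})-d(x,T)|$. Finiteness of $d_m$ has already been established in the previous proposition, so the integral is well defined.

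For the identity of indiscernibles, if $T=\bar{T}$ then $d(\cdot,T)\equiv d(\cdot,\bar{T})$ and $d_m(T,\bar{T})=0$ trivially. For the converse, I would first observe that the function $g(s):=\sup_{x\in[a,b]}\{|d(x,T)-d(x,\bar{T})|\,\mathcal{H}(|s-x|)\}$ is continuous on $[a,b]$: by Eq.~\ref{equation_distance_d} the map $x\mapsto d(x,T)$ is $1$-Lipschitz, $\mathcal{H}$ is continuous, so $(s,x)\mapsto|d(x,T)-d(x,\bar{T})|\,\mathcal{H}(|s-x|)$ is jointly continuous on the compact square $[a,b]^2$, and its partial supremum over $x$ is therefore continuous in $s$. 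Since $g\ge 0$ and $\int_a^b g(s)\,\textrm{d}s = d_m(T,\bar{T})=0$, we get $g\equiv 0$, i.e.\ $|d(x,T)-d(x,\bar{T})|\,\mathcal{H}(|s-x|)=0$ for all $s,x\in[a,b]$. Fixing an arbitrary $x\in[a,b]$ and assuming $|d(x,T)-d(x,\bar{T})|>0$ would force $\mathcal{H}(|s-x|)=0$ for every $s\in[a,b]$, hence $\int_a^b\mathcal{H}(|x-s|)\,\textrm{d}s=0$, contradicting the defining condition Eq.~\ref{equation_h_condition_integral} taken at $u=x$. Thus $d(x,T)=d(x,\bar{T})$ for all $x\in[a,b]$, so $h(T,\bar{T})=0$ by Eq.~\ref{equation_hausdorff_ab}, and then Eq.~\ref{equation_hausdorffii} shows every spike of $T$ lies in $\bar{T}$ and conversely, i.e.\ $T=\bar{T}$.

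For the triangle inequality, fix $T,\hat{T},\bar{T}\in\mathcal{S}_{[a,b]}$. For each $x$ the scalar triangle inequality gives $|d(x,T)-d(x,\bar{T})|\le|d(x,T)-d(x,\hat{T})|+|d(x,\hat{T})-d(x,\bar{T})|$; multiplying by $\mathcal{H}(|s-x|)\ge 0$, taking $\sup_{x\in[a,b]}$ of both sides (using that the supremum of a sum is at most the sum of the suprema), and integrating over $s\in[a,b]$ yields $d_m(T,\bar{T})\le d_m(T,\hat{T})+d_m(\hat{T},\bar{T})$ term by term. The main obstacle, to my mind, is the identity-of-indiscernibles step: one needs the continuity of the partial supremum $g$ so that a vanishing integral of a nonnegative function forces it to vanish pointwise, and then one must correctly invoke the integral-positivity condition on $\mathcal{H}$ (rather than the stronger, and not assumed, condition $\mathcal{H}(0)>0$) to reach the contradiction. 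The triangle inequality, by contrast, is an entirely routine consequence of the scalar triangle inequality together with the monotonicity and subadditivity of $\sup$ and of the integral.
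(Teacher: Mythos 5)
Your proof is correct, and the non-negativity, symmetry, and triangle-inequality parts coincide with the paper's. Where you genuinely diverge is the implication $d_m(T,\bar{T})=0 \Rightarrow T=\bar{T}$. The paper argues directly: it picks a spike $u \in T\setminus\bar{T}$, bounds the integrand from below by $|d(u,T)-d(u,\bar{T})|\,\mathcal{H}(|s-u|) = d(u,\bar{T})\,\mathcal{H}(|s-u|)$ (dropping the supremum in favor of the single point $x=u$), and integrates to get $d_m(T,\bar{T}) \ge d(u,\bar{T})\int_a^b \mathcal{H}(|s-u|)\,\mathrm{d}s > 0$ via Eq.~\ref{equation_h_condition_integral}, a contradiction. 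You instead establish continuity of the partial supremum $g(s)$ (via joint continuity of the integrand on the compact square and the $1$-Lipschitz property of $x\mapsto d(x,T)$), deduce $g\equiv 0$ from the vanishing integral, conclude $d(\cdot,T)\equiv d(\cdot,\bar{T})$ on $[a,b]$ by the same appeal to Eq.~\ref{equation_h_condition_integral}, and finish through $h(T,\bar{T})=0$. Both are valid; your invocation of Eq.~\ref{equation_h_condition_integral} rather than $\mathcal{H}(0)>0$ is exactly right. The paper's route is more elementary, needing no continuity of the supremum function (notably, the paper only proves such a continuity lemma, Lemma~\ref{l1}, under the stronger hypothesis that the kernel is differentiable with bounded derivative, whereas your argument correctly gets by with uniform continuity on the compact square). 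Your route buys a stronger intermediate conclusion, namely that $d(x,T)=d(x,\bar{T})$ for every $x\in[a,b]$ and hence $h(T,\bar{T})=0$, at the cost of the extra continuity step, which you would need to spell out (uniform continuity of the jointly continuous integrand on $[a,b]^2$) in a full write-up.
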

\begin{proof}
In order to show that $d_m(\cdot,\cdot)$ is a metric we need to prove that it is non-negative, that $d_m(T,\bar{T}) = 0 \Leftrightarrow T = \bar{T}$ for any $T,\bar{T}\in \mathcal{S}_{[a,b]}$, that is symmetric, and that it satisfies the triangle inequality.

It is trivial to show that $d_m(T,\bar{T})$ is non-negative, symmetric, and that if $T = \bar{T} \Rightarrow d_m(T,\bar{T}) = 0$.

In order to prove that $d_m(T,\bar{T}) = 0 \Rightarrow T = \bar{T}$ we use a \textit{reductio ad absurdum} argument. Assume  that $d_m(T,\bar{T}) = 0$ with $T \neq \bar{T}$. Then, there must be at least one spike in one of the two spike trains that is not in the other, since we do not allow overlapping spikes within the spike trains. Consider that this spike belongs to $T$; in the case that it belongs to $\bar{T}$, the proof is analogous. Let $u$ be the timing of this spike, $u \in T \setminus \bar{T}$. Because $u \in [a,b]$, we have, $\forall s \in [a,b]$,
\begin{equation}
\sup_{x \in {[a,b]}} \left\lbrace  \left|d(x,T) - d(x,\bar{T})\right| \; \mathcal{H}(|s-x|)\right\rbrace \ge \left|d(u,T) - d(u,\bar{T})\right| \; \mathcal{H}(|s-u|).
\end{equation}
Because $u \in T$, $d(u,T) = 0$. Thus
\begin{equation}
\sup_{x \in {[a,b]}} \left\lbrace  \left|d(x,T) - d(x,\bar{T})\right| \; \mathcal{H}(|s-x|)\right\rbrace \ge d(u,\bar{T}) \; \mathcal{H}(|s-u|).
\end{equation}
By integrating the equation above, we obtain
\begin{equation}\label{star1}
d_m(T,\bar{T}) \ge d(u,\bar{T}) \; \int_{a}^{b} \mathcal{H}(|s-u|) \; \textrm{d}s.
\end{equation}
Because $u \notin \bar{T}$, $d(u,\bar{T}) > 0$. Also considering Eq. \ref{equation_h_condition_integral}, we have
\begin{equation}\label{star67574}
d(u,\bar{T}) \; \int_{a}^{b} \mathcal{H}(|s-u|) \; \textrm{d}s >0.
\end{equation}
Thus, from Eqs. \ref{star1} and \ref{star67574}, $d_m(T,\bar{T})>0$. Since we considered that $d_m(T,\bar{T}) = 0$, this cannot be true.  Hence, $T \subseteq \bar{T}$. Likewise, one can show that $\bar{T} \subseteq T$ and so $T = \bar{T}$.

In order to prove the triangle inequality, consider $\hat{T} \in \mathcal{S}_{[a,b]}$. We have, $\forall x \in \mathbb{R}$,
\begin{equation}
| d(x,T) - d(x,\hat{T}) + d(x,\hat{T}) -d(x,\bar{T})| \leq  |d(x,T) - d(x,\hat{T}) | + |d(x,\hat{T}) -d(x,\bar{T})|.
\end{equation}
Because $\sup_x(f(x) + g(x)) \leq \sup_x(f(x)) + \sup_x(g(x))$ for any two functions $f$ and $g$, it follows that
\begin{align}
\sup_{x \in [a, b] } \left\lbrace  \left|d(x,T) - d(x,\bar{T})\right| \; \mathcal{H}(|s-x|)\right\rbrace &  \le \sup_{x \in [a, b]} \left\lbrace  \left|d(x,T) - d(x,\hat{T})\right| \; \mathcal{H}(|s-x|)\right\rbrace + \nonumber \\
&\quad \sup_{x \in [a, b]} \left\lbrace  \left|d(x,\hat{T}) - d(x,\bar{T})\right| \; \mathcal{H}(|s-x|)\right\rbrace.
\end{align}
After integration, it results that
\begin{equation}
d_m(T,\bar{T}) \le d_m(T,\hat{T}) + d_m(\hat{T},\bar{T}).
\end{equation}

With this final equality we have shown that the distance is indeed a metric and the proof ends.
\end{proof}

%
%
\begin{proposition}\label{echiv1} The metric $d_m : \mathcal{S}_{[a, b]} \times \mathcal{S}_{[a, b]} \to \mathbb{R}$ is topologically equivalent to the Pompeiu-Hausdorff distance.

\end{proposition}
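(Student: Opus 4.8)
The plan is to prove the stronger statement that $d_m$ and $h$ are \emph{strongly} (Lipschitz) equivalent, i.e.\ that there exist constants $0 < c_1 \le c_2 < \infty$, depending only on $\mathcal{H}$, $a$, $b$, such that
\[
c_1 \, h(T,\bar T) \;\le\; d_m(T,\bar T) \;\le\; c_2 \, h(T,\bar T), \qquad \forall \, T, \bar T \in \mathcal{S}_{[a,b]}.
\]
Strong equivalence immediately yields topological equivalence, since an open $d_m$-ball of radius $r$ contains the open $h$-ball of radius $r/c_2$ and is contained in the open $h$-ball of radius $r/c_1$, so the two metrics have the same open sets (cf.\ \cite[p.~229]{osearcoid}). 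Hence, once the displayed double inequality is established, the proposition follows.

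The upper bound is essentially already contained in the proof that $d_m$ is finite: multiplying $|d(x,T)-d(x,\bar T)| \le h(T,\bar T)$ (Eq.~\ref{equation_hausdorff_ab}) by the positive kernel, taking the supremum over $x\in[a,b]$, and integrating over $s\in[a,b]$ gives
\[
d_m(T,\bar T) \;\le\; h(T,\bar T)\int_a^b \sup_{x\in[a,b]} \mathcal{H}(|s-x|)\,\mathrm{d}s \;=:\; c_2 \, h(T,\bar T),
\]
with $0 < c_2 \le (b-a)\,m < \infty$ by Eq.~\ref{equation_h_conditions} and positivity of $\mathcal{H}$.

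For the lower bound, fix $T,\bar T$. The map $x \mapsto d(x,T) - d(x,\bar T)$ is continuous on the compact interval $[a,b]$ (each $d(\cdot,T)$ is a minimum of finitely many continuous functions), so by Appendix~\ref{section_equiv_haus} the supremum defining $h$ is attained at some $x^\ast \in [a,b]$, i.e.\ $|d(x^\ast,T)-d(x^\ast,\bar T)| = h(T,\bar T)$. Then, for every $s \in [a,b]$,
\[
\sup_{x\in[a,b]}\left\{ |d(x,T)-d(x,\bar T)|\,\mathcal{H}(|s-x|)\right\} \;\ge\; h(T,\bar T)\,\mathcal{H}(|s-x^\ast|),
\]
and integrating over $s$ gives $d_m(T,\bar T) \ge h(T,\bar T)\, g(x^\ast)$, where $g(u) := \int_a^b \mathcal{H}(|s-u|)\,\mathrm{d}s$. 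Using the decomposition $g(u) = \int_0^{u-a}\mathcal{H}(s)\,\mathrm{d}s + \int_0^{b-u}\mathcal{H}(s)\,\mathrm{d}s$, the function $g$ is continuous on $[a,b]$, and by Eq.~\ref{equation_h_condition_integral} it is strictly positive there. A continuous, strictly positive function on a compact interval attains a positive minimum, so $c_1 := \min_{u\in[a,b]} g(u) > 0$ and $d_m(T,\bar T) \ge c_1\, h(T,\bar T)$, completing the double inequality.

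The one place that needs care — the main obstacle — is precisely this last step: the bound $d_m \ge h\, g(x^\ast)$ only uses \emph{pointwise} positivity of $g$, and since the point $x^\ast$ depends on the pair of spike trains, what is actually needed is the \emph{uniform} lower bound $\inf_{u\in[a,b]} g(u) > 0$, not merely positivity at a single point. This is exactly what compactness of $[a,b]$ together with continuity of $g$ provides; the remaining manipulations are routine.
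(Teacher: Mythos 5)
Your proof is correct, and it establishes something slightly stronger than the paper does, by a genuinely different route for the lower bound. The paper proves topological equivalence by checking continuity of the identity map in both directions; its upper bound is the same as yours ($d_m \le h\, B(a,b)$ with $B(a,b)=\int_a^b \sup_{x\in[a,b]}\mathcal{H}(|s-x|)\,\mathrm{d}s$), but its lower bound is obtained by evaluating the inner expression only at spike times $t\in T\cup\bar T$ (where $d(t,T)=0$) and invoking the form $h=\max\{\sup_{t\in T}d(t,\bar T),\,\sup_{\bar t\in\bar T}d(\bar t,T)\}$ of Eq.~\ref{equation_hausdorffii}; this yields $d_m \ge h\cdot A(T,\bar T)$ with a constant $A(T,\bar T)=\inf_{t\in T\cup\bar T}\int_a^b\mathcal{H}(|s-t|)\,\mathrm{d}s$ that depends on the pair of spike trains, so the resulting $\delta(\epsilon)$ in the continuity argument is not uniform. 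You instead evaluate at the maximizer $x^\ast$ of the continuous function $|d(\cdot,T)-d(\cdot,\bar T)|$ on the compact interval $[a,b]$ (its existence is correctly justified) and then replace $g(x^\ast)$ by the uniform minimum $c_1=\min_{u\in[a,b]}g(u)>0$, which exists by continuity of $g$ and strict positivity from Eq.~\ref{equation_h_condition_integral}. What your approach buys is bi-Lipschitz (strong) equivalence with constants depending only on $\mathcal{H}$, $a$, $b$, hence uniform continuity of both identity maps and, in particular, the topological equivalence claimed; it also sidesteps the mild infelicity in the paper's argument that $\delta$ depends on $\bar T$ (which is repairable there, since $A(T,\bar T)\ge c_1$, but you never incur it). The paper's approach, in exchange, makes visible the direct link between $d_m$ evaluated at spike times and the max-of-suprema form of the Pompeiu--Hausdorff distance, which is conceptually closer to the original definition. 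Both arguments are sound; yours is the cleaner and more quantitative one.
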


\begin{proof}

In order to show that the metrics $d_m$ and $h$ are topologically equivalent it is sufficient to prove that the identity function  $i_{\mathcal{S}_{[a, b]}} : (\mathcal{S}_{[a, b]},d_m)\rightarrow(\mathcal{S}_{[a, b]},h)$, $i_{\mathcal{S}_{[a, b]}}(T) = T$  and its inverse are both continuous (\citealp[p. 229]{osearcoid}; \citealp[p. 12]{DD09}).

We first show that $i_{\mathcal{S}_{[a, b]}}$ is continuous, which is equivalent to:
$\forall \; T \in \mathcal{S}_{[a, b]}, \; \forall \; \epsilon > 0,  \; \exists \; \delta(\epsilon) > 0 \; \; \text{such that} \;\; \forall \; \bar{T} \in \mathcal{S}_{[a, b]} \; \text{with} \; \; d_m(T,\bar{T}) \leq \delta \;\; \text{we have} \;\; h(T,\bar{T}) \leq \epsilon$.
We choose $\delta(\epsilon) = \epsilon \; A(T,\bar{T})$ with
\begin{equation}
A(T,\bar{T}) = \inf_{t \in T \cup \bar{T}} \int_{a}^{b}\mathcal{H}(|s-t|) \; \mbox{d}s.
\end{equation}
From Eq. \ref{equation_h_condition_integral}, we have that $A(T,\bar{T}) > 0$. From Eq. \ref{maxm}, for all $t \in T$,
\begin{align}
d_m(T,\bar{T}) & \ge \int_{a}^{b} |d(t, T) - d(t,\bar{T})| \; \mathcal{H}(|s-t|) \mbox{d}s \\
&=\int_{a}^{b} d(t,\bar{T}) \; \mathcal{H}(|s-t|) \mbox{d}s \\
& \ge d(t,\bar{T}) \inf_{t' \in T} \int_{a}^{b} \mathcal{H}(|s-t'|) \; \mbox{d}s
\end{align}
and thus
\begin{align}
d_m(T,\bar{T}) \ge & \sup_{t \in T} \, d(t,\bar{T}) \; \inf_{t' \in T} \int_{a}^{b} \mathcal{H}(|s-t'|) \; \mbox{d}s \\
d_m(T,\bar{T}) \ge  & \sup_{t \in T} \, d(t,\bar{T}) \; \inf_{t' \in T \cup \bar{T}} \int_{a}^{b} \mathcal{H}(|s-t'|) \; \mbox{d}s \label{m1p}.
\end{align}
Analogously, for all $\bar{t} \in \bar{T}$,
\begin{equation}\label{m2p}
d_m(T,\bar{T}) \ge \sup_{\bar{t} \in \bar{T}} \, d(\bar{t},T) \; \inf_{t' \in T \cup \bar{T}} \int_{a}^{b} \mathcal{H}(|s-t'|) \; \mbox{d}s.
\end{equation}
Taking the max value in Eqs. \ref{m1p} and  \ref{m2p} we obtain
\begin{equation}
d_m(T,\bar{T}) \ge \max \left\lbrace \sup_{t \in T} \, d(t,\bar{T}), \; \sup_{\bar{t} \in \bar{T}} \, d(\bar{t},T) \right\rbrace \\ \inf_{t' \in T \cup \bar{T}} \int_a^b \mathcal{H}(|s-t'|) \; \mbox{d}s.
\end{equation}
From Eq. \ref{equation_hausdorffii} it follows that
\begin{equation}
d_m(T,\bar{T}) \ge h(T,\bar{T}) \inf_{t \in T \cup \bar{T}} \int_{a}^{b} \mathcal{H}(|s-t|) \; \mbox{d}s.
\end{equation}
Because $\epsilon \; A(T,\bar{T}) = \delta(\epsilon) \geq  d_m(T,\bar{T})$, from the last equation we have that $\epsilon \; A(T,\bar{T}) \geq h(T,\bar{T}) \; A(T,\bar{T})$. Thus $h(T,\bar{T}) \leq \epsilon$ since $A(T,\bar{T})>0$.

It remains to show that $i_{\mathcal{S}_{[a, b]}}^{-1}$ is continuous, which is equivalent to:
$\forall \; T \in \mathcal{S}_{[a, b]}, \; \forall \; \epsilon > 0,  \; \exists \; \delta(\epsilon) > 0 \; \; \text{such that} \;\; \forall \; \bar{T} \in \mathcal{S}_{[a, b]} \; \text{with} \; \; h(T,\bar{T}) \leq \delta \;\; \text{we have} \;\; d_m(T,\bar{T}) \leq \epsilon$.
We choose $\delta(\epsilon) = \epsilon / B(a,b)$ with
\begin{equation}
B(a,b) =  \int_{a}^{b}\sup_{x\in[a,b]} \mathcal{H}(|s-x|) \; \mbox{d}s.
\end{equation}
We have
\begin{equation}
B(a,b) \geq  \sup_{x\in[a,b]} \int_{a}^{b} \mathcal{H}(|s-x|) \; \mbox{d}s.\label{equation_b}
\end{equation}
Considering Eq. \ref{equation_h_condition_integral}, we have $B(a,b)  > 0$. From Eq. \ref{maxm} we have
\begin{align}
 d_m(T,\bar{T})\le & \int_{a}^{b} \sup_{x \in [a,b]} \left|d(x,T) - d(x,\bar{T})\right| \; \sup_{x \in [a,b]} \mathcal{H}(|s-x|) \; \mbox{d}s\\
 d_m(T,\bar{T})\le & \sup_{x \in [a,b]} \left|d(x,T) - d(x,\bar{T})\right| \int_{a}^{b} \sup_{x \in [a,b]} \mathcal{H}(|s-x|) \; \mbox{d}s.
\end{align}
From Eqs. \ref{equation_hausdorff_ab} and \ref{equation_b} we have
\begin{equation}
 d_m(T,\bar{T}) \leq h(T,\bar{T}) \; B(a,b).
\end{equation}
Since $h(T,\bar{T}) \leq \delta(\epsilon) = \epsilon / B(a,b)$ and $B(a,b)>0$, we get $h(T,\bar{T}) \; B(a,b)\leq \epsilon$ and, finally, $d_m(T,\bar{T}) \leq \epsilon$.
\end{proof}

%
%

\section{Analysis of the convolution max-metric} \label{section_analysis_convolution_max_metric}
%
%
\begin{lemma}\label{l1}
Let $g \colon [a,b] \rightarrow \mathbb{R}$ be a continuous function and $h \colon [0,b-a] \rightarrow \mathbb{R}$ be a continuous function which is derivable on $(0, b-a)$ and has bounded derivative. Then the function $q \colon [a,b] \rightarrow \mathbb{R}$,
\begin{equation}
q(s) = \sup_{x \in [a,b]} \; \left[ g(x) \; h(|s-x|) \right]
\end{equation}
is continuous on $[a,b]$.
\end{lemma}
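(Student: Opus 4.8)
The plan is to prove the stronger statement that $q$ is Lipschitz continuous on $[a,b]$; continuity then follows immediately. Two preliminary bounds are needed. First, since $g$ is continuous on the compact interval $[a,b]$, it is bounded there \cite[p. 56]{Protter}, say $|g(x)| \le M_g$ for all $x \in [a,b]$. Second, since $h$ is continuous on $[0,b-a]$, derivable on $(0,b-a)$, and has bounded derivative, it is Lipschitz on $[0,b-a]$: setting $L = \sup_{y \in (0,b-a)} |h'(y)| < \infty$, the mean value theorem gives $|h(u)-h(v)| \le L\,|u-v|$ for every $0 < u < v < b-a$, and this extends to all $u,v \in [0,b-a]$ by the continuity of $h$ at the endpoints.

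Next I would fix $s_1, s_2 \in [a,b]$ and bound $g(x)\,h(|s_1-x|) - g(x)\,h(|s_2-x|)$ uniformly in $x$. For any $x \in [a,b]$, both $|s_1-x|$ and $|s_2-x|$ lie in $[0,b-a]$, so $h$ is evaluated within its domain, and the reverse triangle inequality yields $\bigl|\,|s_1-x| - |s_2-x|\,\bigr| \le |s_1-s_2|$. Combining this with the Lipschitz bound for $h$ and the bound $|g(x)| \le M_g$ gives
\begin{equation}
\bigl|\, g(x)\,h(|s_1-x|) - g(x)\,h(|s_2-x|)\,\bigr| \le M_g\,L\,|s_1-s_2|
\end{equation}
for every $x \in [a,b]$.

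Finally, for each fixed $s$ the function $x \mapsto g(x)\,h(|s-x|)$ is continuous on the compact set $[a,b]$, so the supremum defining $q(s)$ is finite (indeed attained). Applying the elementary inequality $\bigl|\sup_x A(x) - \sup_x B(x)\bigr| \le \sup_x |A(x) - B(x)|$ with $A(x) = g(x)\,h(|s_1-x|)$ and $B(x) = g(x)\,h(|s_2-x|)$, the previous display gives $|q(s_1) - q(s_2)| \le M_g\,L\,|s_1-s_2|$, so $q$ is Lipschitz, hence continuous, on $[a,b]$. There is no serious obstacle here; the only point requiring a little care is the passage from ``bounded derivative on the open interval $(0,b-a)$'' to ``Lipschitz on the closed interval $[0,b-a]$'', which the mean value theorem handles on compact subintervals, the endpoints being covered by the continuity of $h$.
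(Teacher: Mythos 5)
Your proof is correct and follows essentially the same route as the paper: both arguments rest on the boundedness of $g$ on the compact interval, the Lipschitz bound $|h(u)-h(v)|\le L|u-v|$ obtained from the bounded derivative, and the reverse triangle inequality $\bigl||s_1-x|-|s_2-x|\bigr|\le|s_1-s_2|$, yielding the uniform estimate $M\,L\,|s_1-s_2|$ on the integrand difference. The only difference is presentational: you package the conclusion as a Lipschitz bound via $\bigl|\sup_x A(x)-\sup_x B(x)\bigr|\le\sup_x|A(x)-B(x)|$, whereas the paper carries out the two one-sided suprema inequalities explicitly in an $\epsilon$--$\delta$ argument with $\delta=\epsilon/(ML)$.
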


\begin{proof}
Consider $s_0 \in [a,b]$. We need to show that $\forall \; \epsilon > 0$, $\exists \; \delta(\epsilon) > 0$ such that $\forall \; s \in (s_0 - \delta, s_0 + \delta) \bigcap [a, b]$, we have
\begin{equation}
\left| \sup_{x \in [a,b]} \left[ g(x) \; h(|s-x|)\right] - \sup_{x \in [a,b]} \left[ g(x) \; h(|s_0-x|) \right] \right| < \epsilon. \label{equation_lemma_to_demonstrate}
\end{equation}

We have
\begin{align}
g(x) \; h(|s-x|) & -g(x) \; h(|s_0-x|)\le  \left| g(x) \; h(|s-x|) - g(x) \; h(|s_0-x|) \right| \label{equation_lemma_s}\\
g(x) \; h(|s-x|)&\le  \left| g(x) \; [ h(|s-x|) - h(|s_0-x|)] \right| + g(x) \; h(|s_0-x|)  \\
g(x) \; h(|s-x|) &\le |g(x)| \; \left| h(|s-x|) - h(|s_0-x|) \right|+g(x) \; h(|s_0-x|) \label{equation_lemma_a}.
\end{align}
The function $g$ is bounded since it is continuous on a compact interval \citep[p. 56]{Protter}. We denote by $M$ the bound of the absolute value of $g$, i.e. $|g(x)| \le M$, $\forall x \in [a,b]$. We denote by $L$ the bound of the absolute value of the derivative of $h$, i.e. $|h(x)-h(y)|\le L \; |x-y|$, $\forall \; x,y \in [0,b-a]$. Let $\epsilon > 0$ and $\delta = \epsilon/(M \; L) $. Then for all $s \in (s_0 -\delta, s_0 + \delta) \bigcap [a, b]$,
\begin{align}
\left| h(|s-x|) - h(|s_0-x|) \right| &\le L \; \left| |s-x| - |s_0-x| \right| \\
|g(x)| \left| h(|s-x|) - h(|s_0-x|) \right|& \le M \; L \; \left| |s-x| - |s_0-x| \right| .\label{equation_lemma_b}
\end{align}
From Eqs. \ref{equation_lemma_a} and \ref{equation_lemma_b},
\begin{equation}
g(x) \; h(|s-x|)  \le M \; L \; \left| |s-x| - |s_0-x| \right| + g(x) \; h(|s_0-x|).
\end{equation}
Because $\forall u,v \in \mathbb{R}$ we have $| |u| - |v| | \leq |u-v|$, it follows that
\begin{equation}
g(x) \; h(|s-x|) \le  M \; L \; |s - s_0| + g(x) \;  h(|s_0-x|).
\end{equation}
Because $|s - s_0| < \delta = \epsilon/(M \; L)$,
\begin{align}
g(x) \; h(|s-x|)& < M \; L \; \frac{\epsilon}{M \;  L} + g(x) \;  h(|s_0-x|) \nonumber \\
&=\epsilon + g(x) \;  h(|s_0-x|).
\end{align}
Applying supremum to the equation above we obtain
\begin{equation}
\sup_{x \in [a,b]} \left[ g(x) \; h(|s-x|) \right] < \epsilon + \sup_{x \in [a,b]} \left[ g(x) \; h(|s_0-x|) \right].
\end{equation}
It follows that
\begin{equation}
\sup_{x \in [a,b]} \left[ g(x) \; h(|s-x|) \right] - \sup_{x \in [a,b]} \left[ g(x) \; h(|s_0-x|) \right] < \epsilon.
\end{equation}
Analogously, by switching $s$ and $s_0$ in Eq. \ref{equation_lemma_s} and the ensuing equations, we get
\begin{equation}
\sup_{x \in [a,b]} \left[ g(x) \; h(|s_0-x|) \right] - \sup_{x \in [a,b]} \left[ g(x) \; h(|s-x|) \right] < \epsilon.
\end{equation}
Thus we have proved Eq.  \ref{equation_lemma_to_demonstrate} and the proof ends.
\end{proof}

%
%
\begin{proposition}
$d_c(T,\bar{T}) < \infty$.
\end{proposition}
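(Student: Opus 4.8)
The plan is to show that the integrand of $d_c$ is a genuine, bounded, integrable function of $s$ on the compact interval $[a,b]$, and then integrate a uniform constant bound over that interval. First I would record that
\begin{equation}
q(s) = \sup_{x \in [a,b]} \left\{ |f(x) - \bar{f}(x)| \; \mathcal{H}(|s-x|) \right\}
\end{equation}
is well defined and continuous on $[a,b]$. Indeed, since $\mathcal{K}$ is continuous, $f$ and $\bar{f}$ are continuous, so $g(x) := |f(x) - \bar{f}(x)|$ is continuous on $[a,b]$; moreover $\mathcal{H}$ is continuous on $[0,b-a]$, derivable on $(0,b-a)$, and has bounded derivative, by the hypotheses imposed in Section \ref{section_convolution}. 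Hence Lemma \ref{l1}, applied with this $g$ and with $h = \mathcal{H}$, shows that $q$ is continuous on $[a,b]$. In particular the supremum defining $q(s)$ is attained (a continuous function on a compact set), so $q(s)$ is finite for every $s$, and $q$ is integrable on $[a,b]$.

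Next I would produce a uniform bound on $q$. For every $x \in [a,b]$, using $0 \le \mathcal{K}(y) \le p$ for all $y \in \mathbb{R}$, each of the $n$ (resp. $\bar{n}$) terms defining $f(x)$ (resp. $\bar{f}(x)$) is at most $p$, so $0 \le f(x) \le n p$ and $0 \le \bar{f}(x) \le \bar{n} p$, and therefore
\begin{equation}
|f(x) - \bar{f}(x)| \le f(x) + \bar{f}(x) \le (n + \bar{n}) \; p.
\end{equation}
Combining this with the bound $\mathcal{H}(|s-x|) < m$ (Eq. \ref{equation_h_conditions}) gives, for all $s \in [a,b]$, the estimate $q(s) \le (n + \bar{n}) \; p \; m$.

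Finally, integrating this constant bound over $[a,b]$ yields
\begin{equation}
d_c(T,\bar{T}) = \int_a^b q(s) \; \textrm{d}s \le (b-a)(n + \bar{n}) \; p \; m < \infty,
\end{equation}
which is the claim. I do not expect a real obstacle here: the only point requiring care is that the expression inside the integral sign be a legitimate (measurable) function of $s$, rather than merely a pointwise-defined quantity, and Lemma \ref{l1} supplies exactly that; everything else is an elementary consequence of the boundedness of $\mathcal{K}$ by $p$ and of $\mathcal{H}$ by $m$ together with the finiteness of $b-a$.
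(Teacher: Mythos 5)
Your argument is correct and follows essentially the same route as the paper: the key estimate $|f(x)-\bar{f}(x)| \le p\,(n+\bar{n})$ from the bound on $\mathcal{K}$, combined with $\mathcal{H}(\cdot) < m$, integrated over $[a,b]$ to give the bound $p\,(n+\bar{n})\,(b-a)\,m$. Your preliminary appeal to Lemma \ref{l1} to certify that the integrand is a continuous (hence integrable) function of $s$ is a small extra precaution the paper omits here (it invokes that lemma only in the subsequent metric proof), but it does not change the substance of the argument.
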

\begin{proof}
\noindent Because $\mathcal{K}$ is positive, we have $f \geq 0$ and $\bar{f} \geq 0$. Moreover, $\forall x \in \mathbb{R}$,
\begin{align}
 \left| f(x) - \bar{f}(x) \right| &\le  \left| f(x) \right| + \left|  \bar{f}(x) \right| = f(x) + \bar{f}(x) \nonumber \\
   \left| f(x) - \bar{f}(x) \right| &\le\sum_{i=1}^{n} \mathcal{K}(x-t^{(i)}) + \sum_{i=1}^{\bar{n}} \mathcal{K}(x-\bar{t}^{(i)}).
\end{align}
Since $\mathcal{K}(x) \le p$, $\forall \; x \in \mathbb{R}$,
\begin{equation}
\left| f(x) - \bar{f}(x) \right| \le p\; (n + \bar{n}).
\end{equation}
From Eq. \ref{m1},
\begin{align}
 d_c(T,\bar{T}) \le & \int_{a}^{b} \sup_{x\in[a,b]} \left| f(x) - \bar{f}(x) \right| \sup_{x\in[a,b]} \mathcal{H}(|s-x|) \; \textrm{d}s\\
 d_c(T,\bar{T}) \le  & p\; (n + \bar{n}) \int_{a}^{b} \sup_{x\in[a,b]} \mathcal{H}(|s-x|) \; \textrm{d}s.
\end{align}
Because $\mathcal{H} (y) <m, \; \forall \; y \in [0, b-a] $ (Eq. \ref{equation_h_conditions}), it follows that
\begin{equation}
 d_c(T,\bar{T}) \le p\; (n + \bar{n})\;  (b-a) \; m < \infty.
\end{equation}

\end{proof}

%
%
\begin{proposition}
$d_c : \mathcal{S}_{[a, b]} \times \mathcal{S}_{[a, b]} \to \mathbb{R}^+$ is a metric.
\end{proposition}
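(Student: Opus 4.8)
The plan is to follow the same four-part template used above for $d_m$, with the filtered functions $f,\bar f$ playing the role that $d(\cdot,T),d(\cdot,\bar T)$ played there; finiteness of $d_c$ having already been established, it remains to check non-negativity, symmetry, the triangle inequality, and the equivalence $d_c(T,\bar T)=0\Leftrightarrow T=\bar T$. Non-negativity and symmetry are immediate from Eq. \ref{m1}, since the integrand is a supremum of products of non-negative quantities and is symmetric in $T,\bar T$, and $T=\bar T$ plainly gives $f=\bar f$, hence $d_c(T,\bar T)=0$. The triangle inequality is proved verbatim as for $d_m$: for any $\hat T\in\mathcal S_{[a,b]}$ and any $x$ one has $|f(x)-\bar f(x)|\le|f(x)-\hat f(x)|+|\hat f(x)-\bar f(x)|$; multiplying by $\mathcal H(|s-x|)\ge 0$, using $\sup_x(\varphi+\psi)\le\sup_x\varphi+\sup_x\psi$, and integrating over $s\in[a,b]$ gives $d_c(T,\bar T)\le d_c(T,\hat T)+d_c(\hat T,\bar T)$.

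The substantive part is $d_c(T,\bar T)=0\Rightarrow T=\bar T$, which I would split into two stages. First, reduce to $f\equiv\bar f$ on $[a,b]$: the integrand $s\mapsto\sup_{x\in[a,b]}\{|f(x)-\bar f(x)|\,\mathcal H(|s-x|)\}$ is continuous on $[a,b]$ by Lemma \ref{l1} (taking $g=|f-\bar f|$, which is continuous since $f,\bar f$ are finite sums of the continuous kernel $\mathcal K$, and $h=\mathcal H$, which by hypothesis is derivable on $(0,b-a)$ with bounded derivative); a non-negative continuous function with zero integral over $[a,b]$ vanishes identically, so taking $s=x$ (or any $s$ near $x$ with $\mathcal H(|s-x|)>0$, which exists because $\mathcal H$ is strictly positive) forces $|f(x)-\bar f(x)|=0$ for every $x\in[a,b]$, i.e. $f\equiv\bar f$ on $[a,b]$.

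The second stage — the main obstacle — is injectivity of the filtering map $T\mapsto f|_{[a,b]}$, i.e. $f\equiv\bar f$ on $[a,b]\Rightarrow T=\bar T$. I would argue by contradiction: assuming $T\ne\bar T$, let $r$ be the smallest spike time lying in exactly one of the two trains, say $r\in T\setminus\bar T$ (the other case is symmetric), so that all spikes strictly below $r$ are common to both trains. On the interval immediately to the right of $r$ (below the next element of $T\cup\bar T$, or below $b$) the term $\mathcal K(x-r)$ occurs in $f$ but not in $\bar f$, and the strict monotonicity of $\mathcal K$ on each side of $0$ prevents it from being cancelled by the remaining right-side tails over a whole interval; for the exponential kernel (Eq. \ref{equation_K_exponential}) this is transparent, since on each inter-spike interval $f$ has the form $\alpha\,e^{-x/\tau}+\beta\,e^{x/\tau}$ and, sweeping rightward from the leftmost spike, the coefficients of $f$ and $\bar f$ must coincide piece by piece, which fails at the first spike belonging to only one train; for a general admissible $\mathcal K$ one exploits the corner/derivative behaviour of $\mathcal K$ at $0$ in the same spirit. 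This contradiction yields $T=\bar T$ and completes the proof. I expect the only delicate points beyond the injectivity argument to be routine boundary bookkeeping (spikes coinciding with $a$ or $b$, or an adjacent interval degenerating).
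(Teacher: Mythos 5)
Your proposal follows essentially the same route as the paper's proof: the trivial properties and the triangle inequality are handled identically, and for the identity-of-indiscernibles direction the paper likewise invokes Lemma \ref{l1} to get continuity of $q(s)=\sup_{x\in[a,b]}\{|f(x)-\bar f(x)|\,\mathcal H(|s-x|)\}$ and uses the strict positivity of $\mathcal H$ to conclude that $f\not\equiv\bar f$ forces $\int_a^b q>0$; your contrapositive ordering of these steps is an immaterial difference. The one place you genuinely diverge is the injectivity of the filtering map $T\mapsto f|_{[a,b]}$: the paper simply asserts that ``because of the properties of $\mathcal K$, $T\neq\bar T$ implies $f(x)\neq\bar f(x)$ for some $x$'' and offers no argument, whereas you correctly identify this as the main obstacle and sketch a proof (leftmost non-common spike $r$, non-cancellation of $\mathcal K(\cdot-r)$). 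Your sketch is complete for the exponential kernel of Eq.~\ref{equation_K_exponential}, where $f$ has a corner with a fixed derivative jump at each spike time, so the spike set is recoverable from $f$. Be aware, however, that the paper's hypotheses declare $\mathcal K$ to be smooth (even though its own example $\mathcal K_E$ has a corner at $0$); for a genuinely differentiable admissible kernel one has $\mathcal K'(0)=0$, there is no corner to exploit, and your phrase ``in the same spirit'' does not cover this case --- one would instead need, e.g., analyticity together with linear independence of translates, or a Fourier/deconvolution argument. Since the paper leaves exactly this step unjustified, your attempt is strictly more careful than the published proof, but neither argument is complete for arbitrary kernels satisfying the stated conditions.
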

\begin{proof}
In order to show that $d_c(\cdot,\cdot)$ is a metric we need to prove that it is non-negative, that $d_c(T,\bar{T}) = 0 \Leftrightarrow T = \bar{T}$ for any $T,\bar{T}\in \mathcal{S}_{[a, b]}$, that it is symmetric, and that it satisfies the triangle inequality.

It is trivial to show that $d_c(T,\bar{T})$ is non-negative, symmetric, and that if $T = \bar{T} \Rightarrow d_c(T,\bar{T}) = 0$.
In order to prove that $d_c(T,\bar{T}) = 0 \Rightarrow T = \bar{T}$  we use a \textit{reductio ad absurdum} argument. Assume  that $d_c(T,\bar{T}) = 0$ with $T \neq \bar{T}$.
For $s \in [a,b]$ let
\begin{equation}
q(s) = \sup_{x \in [a,b]} \left\lbrace \left| f(x) - \bar{f}(x) \right| \; \mathcal{H}(|s-x|) \right\rbrace.
\end{equation}
Because $|f(x) - \bar{f}(x)|$ is continuous, from the properties of $\mathcal{H}$ and Lemma \ref{l1} we obtain that $q$ is continuous. Because of the properties of $\mathcal{K}$, $T \neq \bar{T} \Rightarrow \exists \;  x \in [a,b]$ such that $f(x) \neq \bar{f}(x)$; and  because $\mathcal{H}$ is strictly positive,  it follows that $q$ is not zero everywhere, $\exists x \in [a,b]$ such that $q(x) > 0$. Because $q$ is continuous, it follows that
\begin{equation}
d_c(T,\bar{T}) = \int_{a}^{b} q(s) \; \textrm{d}s > 0
\end{equation}
which contradicts the hypothesis that $d_c(T,\bar{T}) = 0$. Hence, $T = \bar{T}$.

In order to prove the triangle inequality, consider $\hat{T} \in \mathcal{S}_{[a, b]}$. We have, $\forall x, s \in [a, b]$,
\begin{equation}
\left| f(x) - \hat{f}(x) + \hat{f}(x) - \bar{f}(x) \right| \; \mathcal{H}(|s-x|) \le \left( \left| f(x) - \hat{f}(x) \right| + \left| \hat{f}(x) - \bar{f}(x) \right| \right) \; \mathcal{H}(|s-x|).
\end{equation}
Because $\sup_s(g(s) + h(s)) \leq \sup_s(g(s)) + \sup_s(h(s))$ for any two functions $g$ and $h$, it follows that, $\forall s \in [a,b]$,
\begin{align}
\sup_{x \in {[a,b]} } \left\lbrace  \left| f(x) - \bar{f}(x) \right| \; \mathcal{H}(|s-x|)\right\rbrace &  \le \sup_{x \in {[a,b]} } \left\lbrace  \left|f(x) - \hat{f}(x)\right| \; \mathcal{H}(|s-x|)\right\rbrace + \nonumber \\
&\quad \sup_{x \in {[a,b]} } \left\lbrace  \left| \hat{f}(x) - \bar{f}(x)\right| \; \mathcal{H}(|s-x|)\right\rbrace.
\end{align}
After integration, it results that
\begin{equation}
d_c(T,\bar{T}) \le d_c(T,\hat{T}) + d_c(\hat{T},\bar{T}).
\end{equation}

With this final equality we have shown that the distance is indeed a metric and the proof ends.
\end{proof}

%
%

\section{Analysis of the localized max-metric}\label{section_analysis_localized_max_metric}

%
%
\begin{proposition}
$d_l(T,\bar{T}) < \infty$.
\end{proposition}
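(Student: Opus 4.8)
The plan is to bound the integrand pointwise by a constant, exactly as in the finiteness proof for the max-metric $d_m$. First I would observe that for every $s\in[a,b]$ the inner supremum is taken over $[s,b]\subseteq[a,b]$, so that
\begin{equation}
\sup_{x\in[s,b]}\bigl|d(x,T)-d(x,\bar{T})\bigr|\le\sup_{x\in[a,b]}\bigl|d(x,T)-d(x,\bar{T})\bigr|=h(T,\bar{T}),
\end{equation}
where the last equality is the equivalent form of the Pompeiu--Hausdorff distance in Eq.~\ref{equation_hausdorff_ab}. Since all spikes of $T$ and $\bar{T}$ lie in $[a,b]$, we also have $h(T,\bar{T})\le b-a$, as used in the proof that $d_m(T,\bar{T})<\infty$.

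Next I would multiply by $\mathcal{L}(b-s)$, which is nonnegative and, by Eq.~\ref{equation_l_conditions}, satisfies $\mathcal{L}(y)<m$ for all $y\in[0,b-a]$; hence for all $s\in[a,b]$
\begin{equation}
\mathcal{L}(b-s)\,\sup_{x\in[s,b]}\bigl|d(x,T)-d(x,\bar{T})\bigr|\le m\,(b-a).
\end{equation}
Integrating this inequality over $[a,b]$ then yields
\begin{equation}
d_l(T,\bar{T})\le\int_a^b m\,(b-a)\,\mathrm{d}s=(b-a)^2\,m<\infty.
\end{equation}

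The only point that warrants a word of care is that the integral defining $d_l$ is well posed: the map $s\mapsto\sup_{x\in[s,b]}|d(x,T)-d(x,\bar{T})|$ is non-increasing in $s$, since the interval over which the supremum is taken shrinks as $s$ grows, hence it is Borel measurable; multiplied by the continuous function $\mathcal{L}(b-s)$ it remains measurable, and we have just shown it is bounded on the finite interval $[a,b]$, so it is integrable. I expect no genuine obstacle here — the argument is a direct transcription of the finiteness proof for the max-metric, with $\mathcal{H}$ replaced by $\mathcal{L}$ and the two-sided localization window around $s$ replaced by the one-sided window $[s,b]$.
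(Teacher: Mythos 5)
Your proof is correct and follows essentially the same route as the paper's: bound the inner supremum over $[s,b]$ by the supremum over $[a,b]$, identify it with $h(T,\bar{T})\le b-a$ via Eq.~\ref{equation_hausdorff_ab}, multiply by $\mathcal{L}(b-s)<m$, and integrate to get $d_l(T,\bar{T})\le m\,(b-a)^2$. The added remark on measurability of the monotone integrand is a harmless refinement the paper omits.
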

\begin{proof}
\noindent For every $s \in [a,b]$ we have
\begin{align}
 h(T,\bar{T}) &= \sup_{x\in[a,b]}  \left| d(x,T) - d(x,\bar{T}) \right|  \nonumber \\
  &\ge \sup_{x\in[s,b]}  \left| d(x,T) - d(x,\bar{T}) \right| \\
  h(T,\bar{T}) \; \mathcal{L}(b-s) &\ge \mathcal{L}(b-s)\sup_{x\in[s,b]}  \left| d(x,T) - d(x,\bar{T}) \right| .
\end{align}
By integrating the last equation above, we obtain
\begin{equation}
\int_{a}^{b} \mathcal{L}(b-s)\sup_{x\in[s,b]} \left| d(x,T) - d(x,\bar{T}) \right|  \textrm{d}s \leq h(T,\bar{T}) \int_{a}^{b} \mathcal{L}(b-s) \textrm{d}s.
\end{equation}
Since $h(T,\bar{T}) \le b-a$ and $\mathcal{L}(y) < m$, $\forall y \in [0, b-a]$ (Eq. \ref{equation_l_conditions}), it follows that
\begin{equation}
 d_l(T,\bar{T}) \le m\; (b-a)^2 < \infty.
\end{equation}
\end{proof}

%
%
\begin{proposition}
$d_l : \mathcal{S}_{[a,b]} \times \mathcal{S}_{[a,b]} \to \mathbb{R}^+$ is a metric.
\end{proposition}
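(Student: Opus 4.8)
The plan is to verify the four defining properties of a metric for $d_l$, following closely the structure of the preceding proof for $d_m$. Non-negativity is immediate, since the integrand $\mathcal{L}(b-s)\,\sup_{x\in[s,b]}|d(x,T)-d(x,\bar T)|$ is a product of non-negative quantities ($\mathcal{L}\ge 0$). Symmetry follows from $|d(x,T)-d(x,\bar T)|=|d(x,\bar T)-d(x,T)|$, and $T=\bar T$ makes the integrand vanish identically, so $d_l(T,\bar T)=0$. For the triangle inequality, fix $\hat T\in\mathcal{S}_{[a,b]}$ and start from the pointwise bound $|d(x,T)-d(x,\bar T)|\le|d(x,T)-d(x,\hat T)|+|d(x,\hat T)-d(x,\bar T)|$; taking $\sup_{x\in[s,b]}$ and using subadditivity of the supremum, then multiplying by $\mathcal{L}(b-s)\ge 0$ and integrating over $s\in[a,b]$, gives $d_l(T,\bar T)\le d_l(T,\hat T)+d_l(\hat T,\bar T)$. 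These three parts are routine and essentially copy the $d_m$ argument; finiteness has already been established above.

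The substantive step is the implication $d_l(T,\bar T)=0\Rightarrow T=\bar T$, which I would prove by contradiction. Suppose $T\ne\bar T$. Since overlapping spikes are disallowed, some spike $u$ lies in exactly one train, say $u\in T\setminus\bar T$ (the reverse case is symmetric). Set $\phi(x)=|d(x,T)-d(x,\bar T)|$; then $\phi(u)=d(u,\bar T)>0$, and $\phi$ is continuous on $[a,b]$ because $d(\cdot,T)$ and $d(\cdot,\bar T)$, being pointwise minima over finitely many spikes of the continuous maps $x\mapsto|x-t|$, are continuous. Let $c=\sup\{x\in[a,b]:\phi(x)>0\}$; the set $\{\phi>0\}$ is relatively open in $[a,b]$ and contains $u$, so $c>a$ (clear if $u$ is interior or $u=b$, and if $u=a$ openness forces $\{\phi>0\}$ to contain an interval $[a,a+\eta)$). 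For every $s\in[a,c)$ there is $y\in(s,c]\subseteq[s,b]$ with $\phi(y)>0$, hence $\sup_{x\in[s,b]}\phi(x)\ge\phi(y)>0$; since also $s<b$ implies $\mathcal{L}(b-s)>0$ (recall $\mathcal{L}$ is strictly positive on $(0,b-a]$), the integrand $\mathcal{L}(b-s)\sup_{x\in[s,b]}\phi(x)$ is strictly positive on $[a,c)$, a set of positive length $c-a$. Therefore $d_l(T,\bar T)>0$, a contradiction; the same argument with $T$ and $\bar T$ swapped finishes the proof.

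I expect the positivity argument in the second paragraph to be the only real obstacle. The complication, relative to the $d_m$ case, comes from the one-sided supremum $\sup_{x\in[s,b]}$: one must ensure that a point witnessing $\phi>0$ stays inside $[s,b]$ for all $s$ ranging over a set of positive measure, which is exactly why the monotonicity of $s\mapsto\sup_{x\in[s,b]}\phi(x)$ and the continuity of $\phi$ near a discriminating spike are brought in, and why the boundary behaviour (a discriminating spike located at $a$, or the possible vanishing of $\mathcal{L}$ at $0$ when $s=b$) must be checked separately. Everything else is bookkeeping of the same kind already carried out for the max-metric.
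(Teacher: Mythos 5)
Your proposal is correct, and its overall skeleton (non-negativity, symmetry, and the triangle inequality handled exactly as for $d_m$; the implication $d_l(T,\bar T)=0\Rightarrow T=\bar T$ by \emph{reductio ad absurdum} from a spike $u\in T\setminus\bar T$) matches the paper's. The one place where you genuinely diverge is the positivity argument. The paper splits into two cases: for $u>a$ it bounds $\sup_{x\in[s,b]}|d(x,T)-d(x,\bar T)|\ge d(u,\bar T)>0$ for every $s\in[a,u]$ and integrates $\mathcal{L}(b-s)$ over $[a,u]$, an interval of positive length; for the boundary case $u=a$ (where that interval degenerates to a point) it makes an explicit geometric computation, showing $|d(x,T)-d(x,\bar T)|>0$ on $[u,(u+v)/2)$ with $v$ the next spike, via $d(x,T)=x-u$ and $d(x,\bar T)\ge v-x$. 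You instead treat both cases uniformly: continuity of $\phi$ makes $\{\phi>0\}$ relatively open, so its supremum $c$ exceeds $a$ even when the discriminating spike sits at $a$, and the monotone one-sided supremum is then positive on all of $[a,c)$. Your route avoids the case split and the explicit interspike computation at the cost of a slightly more abstract measure-/topology-flavoured final step (positivity of the integral of a non-negative function that is positive on a set of positive length, which is justified here since $s\mapsto\sup_{x\in[s,b]}\phi(x)$ is non-increasing, indeed continuous); the paper's version is more computational but entirely elementary. Both correctly use only the strict positivity of $\mathcal{L}$ on $(0,b-a]$, which is the assumption the paper actually relies on.
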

\begin{proof}

In order to show that $d_l(\cdot,\cdot)$ is a metric we need to prove that it is non-negative, that $d_l(T,\bar{T}) = 0 \Leftrightarrow T = \bar{T}$ for any $T,\bar{T}\in \mathcal{S}_{[a,b]}$, that it is symmetric, and that it satisfies the triangle inequality.

It is trivial to show that $d_l(T,\bar{T})$ is non-negative, symmetric, and that if $T = \bar{T} \Rightarrow d_l(T,\bar{T}) = 0$.

In order to prove that $d_l(T,\bar{T}) = 0 \Rightarrow T = \bar{T}$ we use a \textit{reductio ad absurdum} argument. Assume  that $d_l(T,\bar{T}) = 0$ with $T \neq \bar{T}$. Then, there must be at least one spike in one of the two spike trains that is not in the other. Consider that this spike belongs to $T$; in the case it belongs to $\bar{T}$, the proof is analogous. Let $u$ be the timing of this spike, $u \in T \setminus \bar{T}$.

First, we consider the case $u>a$. We have, $\forall s \in [a,u]$,
\begin{equation}
\sup_{x \in {[s,b]}}   \left|d(x,T) - d(x,\bar{T})\right|\ge \left|d(u,T) - d(u,\bar{T})\right|.
\end{equation}
Because $u \in T \setminus \bar{T}$, $d(u,T) = 0$ and $d(u,\bar{T}) > 0$. Thus,
\begin{equation}
\sup_{x \in {[s,b]}}  \left|d(x,T) - d(x,\bar{T})\right|\ge d(u,\bar{T}).
\end{equation}
Multiplying by $\mathcal{L}(\cdot)$ and integrating the above equation, we obtain
\begin{equation}\label{star}
\int_{a}^{u} \mathcal{L}(b-s) \sup_{x \in {[s,b]}}   \left|d(x,T) - d(x,\bar{T})\right| \; \textrm{d}s \ge d(u,\bar{T}) \int_{a}^{u} \mathcal{L}(b-s)\; \textrm{d}s.
\end{equation}
We also have
\begin{equation}
\int_{u}^{b} \mathcal{L}(b-s) \sup_{x \in {[s,b]}}   \left|d(x,T) - d(x,\bar{T})\right| \; \textrm{d}s \geq 0.
\end{equation}
By adding the last two inequations above, and considering Eq. \ref{m4}, we have:
\begin{equation}
d_l(T,\bar{T}) \ge  d(u,\bar{T}) \int_{a}^{u} \mathcal{L}(b-s)\; \textrm{d}s.
\end{equation}
Because $\mathcal{L}$ is strictly positive on $(0, b-a]$ and continuous, we have $\int_{a}^{u} \mathcal{L}(b-s)\; \textrm{d}s > 0$. Because we have $d(u,\bar{T})>0$, we get
\begin{equation}
d_l(T,\bar{T}) >0.\label{equation_localized_max_metric_proof_1}
\end{equation}
Since we have considered $d_l(T,\bar{T}) = 0$, it follows that Eq. \ref{equation_localized_max_metric_proof_1} cannot be true.  Hence, $T \subseteq \bar{T}$.

Second, we consider the case $u=a$. Let $v$ be the timing of the first spike in either $T$ or $\bar{T}$, other than $u$. Since  $T \neq \bar{T}$, $v>u$. Because
\begin{equation}
\int_{(u+v)/2}^{b} \mathcal{L}(b-s) \sup_{x \in {[s,b]}}   \left|d(x,T) - d(x,\bar{T})\right| \; \textrm{d}s \geq 0,
\end{equation}
from Eq. \ref{m4} we have
\begin{align}
d_l(T,\bar{T}) \ge &\int_{a}^{(u+v)/2} \mathcal{L}(b-s) \sup_{x \in {[s,b]}}   \left|d(x,T) - d(x,\bar{T})\right| \; \textrm{d}s \\
d_l(T,\bar{T}) \ge & \int_{a}^{(u+v)/2} \mathcal{L}(b-s) \sup_{x \in {[s,(u+v)/2]}}   \left|d(x,T) - d(x,\bar{T})\right| \; \textrm{d}s .
\end{align}
For all $x \in \left[u, (u+v)/2\right)$ we have $d(x,T)=x-u$, $d(x,\bar{T}) \geq v-x$, and because on this interval $v-x>x-u$, we have $\left|d(x,T) - d(x,\bar{T})\right|>0$. Because $\mathcal{L}$ is strictly positive on $(0, b-a]$, we get
\begin{equation}
\int_{a}^{(u+v)/2} \mathcal{L}(b-s) \sup_{x \in {[s,(u+v)/2]}}   \left|d(x,T) - d(x,\bar{T})\right| \; \textrm{d}s >0
\end{equation}
and thus \begin{equation}
d_l(T,\bar{T}) >0.\label{equation_localized_max_metric_proof_1b}
\end{equation}
Since we have considered $d_l(T,\bar{T}) = 0$, it follows that Eq. \ref{equation_localized_max_metric_proof_1b} cannot be true.  Hence, $T \subseteq \bar{T}$.

Thus, we have shown that in both the case $u>a$ and the case $u=a$ we have $T \subseteq \bar{T}$. Likewise, one can show that $\bar{T} \subseteq T$ and so $T = \bar{T}$ if $d_l(T,\bar{T}) = 0$.

In order to prove the triangle inequality consider $\hat{T} \in \mathcal{S}_{[a, b]}$. We have, $\forall x \in [a,b]$,
\begin{equation}
| d(x,T) - d(x,\hat{T}) + d(x,\hat{T}) -d(x,\bar{T})| \leq  |d(x,T) - d(x,\hat{T}) | + |d(x,\hat{T}) -d(x,\bar{T})|.
\end{equation}
Because $\sup_x(g(x) + h(x)) \leq \sup_x(g(x)) + \sup_x(h(x))$ for any two functions $g$ and $h$, it follows that, $\forall s \in [a,b]$,
\begin{align}
\mathcal{L}(b-s) \sup_{x \in {[s,b]} }   \left|d(x,T) - d(x,\bar{T})\right| &  \le \mathcal{L}(b-s) \sup_{x \in {[s,b]} }  \left|d(x,T) - d(x,\hat{T})\right| + \nonumber \\
&\quad \mathcal{L}(b-s) \sup_{x \in {[s,b]} }   \left|d(x,\hat{T}) - d(x,\bar{T})\right|.
\end{align}
After integration, it results that $d_l(T,\bar{T}) \le d_l(T,\hat{T}) + d_l(\hat{T},\bar{T})$.

With this final equality we have shown that the distance is indeed a metric and the proof ends.

\end{proof}

%
%

\section{Analysis of the localized modulus-metric}\label{section_analysis_localized_modulus_metric}

%
%
\begin{proposition}
$d_n(T,\bar{T}) < \infty$.
\end{proposition}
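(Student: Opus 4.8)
The plan is to bound the integrand pointwise and then integrate, exactly as in the finiteness proofs for $d_m$ and $d_l$. First I would invoke Eq. \ref{equation_hausdorff_ab}, which gives $\left| d(s,T) - d(s,\bar{T}) \right| \le h(T,\bar{T})$ for every $s \in [a,b]$. Since $\mathcal{L}(b-s) \ge 0$, multiplying this inequality by $\mathcal{L}(b-s)$ preserves it, so
\begin{equation}
\left| d(s,T) - d(s,\bar{T}) \right| \; \mathcal{L}(b-s) \le h(T,\bar{T}) \; \mathcal{L}(b-s), \quad \forall s \in [a,b].
\end{equation}

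Next I would integrate over $[a,b]$ and pull the constant $h(T,\bar{T})$ (which does not depend on $s$) out of the integral, using the definition of $d_n$ in Eq. \ref{m3}:
\begin{equation}
d_n(T,\bar{T}) = \int_a^b \left| d(s,T) - d(s,\bar{T}) \right| \; \mathcal{L}(b-s) \; \textrm{d}s \le h(T,\bar{T}) \int_a^b \mathcal{L}(b-s) \; \textrm{d}s.
\end{equation}
Finally I would use the two standard bounds: $h(T,\bar{T}) \le b-a$ (the Pompeiu-Hausdorff distance between subsets of $[a,b]$ cannot exceed the length of the interval) and $\mathcal{L}(y) < m$ for all $y \in [0,b-a]$ from Eq. \ref{equation_l_conditions}, so that $\int_a^b \mathcal{L}(b-s)\,\textrm{d}s \le m(b-a)$. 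Combining,
\begin{equation}
d_n(T,\bar{T}) \le m \; (b-a)^2 < \infty.
\end{equation}

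There is essentially no obstacle here: the argument is a routine two-line estimate that reuses results already established in the excerpt (the equivalent Hausdorff form on $[a,b]$, the boundedness of $\mathcal{L}$, and the trivial bound $h \le b-a$). The only point requiring a word of care is that $\mathcal{L}$ need not be monotone, but boundedness of $\mathcal{L}$ on its compact domain, which follows from continuity as noted before Eq. \ref{equation_l_conditions}, is all that is used, so no structural hypothesis on $\mathcal{L}$ beyond the stated ones is needed. This mirrors the remark in the main text that the properties of $d_n$ can be obtained by particularizing the localized-metric proofs, and indeed setting aside the kernel the estimate is the same one used for the plain modulus-metric with $\mathcal{L}\equiv 1$.
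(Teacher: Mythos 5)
Your proof is correct and follows exactly the paper's own argument: bound the integrand by $h(T,\bar{T})\,\mathcal{L}(b-s)$ via Eq.~\ref{equation_hausdorff_ab}, integrate, and apply $h(T,\bar{T})\le b-a$ together with the bound $m$ on $\mathcal{L}$ to obtain $d_n(T,\bar{T})\le m\,(b-a)^2<\infty$. No differences worth noting.
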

\begin{proof}
From Eq. \ref{equation_hausdorff_ab}, we have, $\forall s \in [a,b]$,
\begin{equation}
 h(T,\bar{T}) \ge \left| d(s,T) - d(s,\bar{T}) \right|.
\end{equation}
Multiplying by $\mathcal{L}(\cdot)$, which is positive, we obtain
\begin{equation}
h(T,\bar{T}) \; \mathcal{L}(b-s) \ge \mathcal{L}(b-s) \; \left| d(s,T) - d(s,\bar{T}) \right|.
\end{equation}
By integrating the above equation, we obtain
\begin{equation}
h(T,\bar{T}) \int_{a}^{b} \mathcal{L}(b-s) \; \textrm{d}s \ge \int_{a}^{b} \mathcal{L}(b-s) \; \left| d(s,T) - d(s,\bar{T}) \right| \; \textrm{d}s.
\end{equation}
Since $h(T,\bar{T}) \le b-a$ and $\mathcal{L}(x) < m$, $\forall x \in [0,b-a]$ it follows that
\begin{equation}
 d_n(T,\bar{T}) < m \; (b-a)^2 < \infty.
\end{equation}
\end{proof}

%
%
\begin{proposition}
$d_n : \mathcal{S}_{[a, b]} \times \mathcal{S}_{[a, b]} \to \mathbb{R}^+$ is a metric.
\end{proposition}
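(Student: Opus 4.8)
The plan is to verify, in turn, the four defining properties of a metric: non-negativity, symmetry, the identity of indiscernibles ($d_n(T,\bar{T}) = 0 \Leftrightarrow T = \bar{T}$), and the triangle inequality. Finiteness of $d_n$ is already established by the preceding proposition, so $d_n$ does map into $\mathbb{R}^+$. The overall structure will mirror the proofs already given for $d_m$ and $d_l$.

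Non-negativity and symmetry will be immediate: the integrand $|d(s,T) - d(s,\bar{T})| \, \mathcal{L}(b-s)$ is a product of two non-negative quantities (recall $\mathcal{L} \geq 0$), and it is manifestly unchanged when $T$ and $\bar{T}$ are swapped; similarly $T = \bar{T}$ forces $d(s,T) = d(s,\bar{T})$ for all $s$, hence $d_n(T,\bar{T}) = 0$. The substantive direction is $d_n(T,\bar{T}) = 0 \Rightarrow T = \bar{T}$, which I would argue by \emph{reductio ad absurdum}. Suppose $d_n(T,\bar{T}) = 0$ but $T \neq \bar{T}$; since overlapping spikes are excluded, there is a spike time $u$ lying in exactly one of the two trains, say $u \in T \setminus \bar{T}$ (the other case being symmetric). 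Then $d(u,T) = 0$ while $d(u,\bar{T}) > 0$, so the function $\phi(s) = |d(s,T) - d(s,\bar{T})|$ satisfies $\phi(u) > 0$. The key point is that $\phi$ is continuous — indeed each $s \mapsto d(s,T)$ is $1$-Lipschitz, being a distance-to-a-set function — so $\phi$ remains strictly positive on some subinterval $I \subseteq [a,b]$ around $u$ of positive length. On $I$, except possibly at the single point $s = b$, one has $b - s \in (0, b-a]$, where $\mathcal{L}$ is strictly positive; hence $\phi(s)\,\mathcal{L}(b-s) > 0$ on $I$ away from a null set, and therefore $d_n(T,\bar{T}) \geq \int_I \phi(s)\,\mathcal{L}(b-s)\,\textrm{d}s > 0$, contradicting the hypothesis. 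This yields $T \subseteq \bar{T}$, and the symmetric argument gives $\bar{T} \subseteq T$, so $T = \bar{T}$.

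The triangle inequality will follow by the familiar pointwise-then-integrate pattern: for any $\hat{T} \in \mathcal{S}_{[a,b]}$ and every $s \in [a,b]$, the triangle inequality for absolute values gives $|d(s,T) - d(s,\bar{T})| \leq |d(s,T) - d(s,\hat{T})| + |d(s,\hat{T}) - d(s,\bar{T})|$; multiplying by the non-negative weight $\mathcal{L}(b-s)$ and integrating over $[a,b]$ yields $d_n(T,\bar{T}) \leq d_n(T,\hat{T}) + d_n(\hat{T},\bar{T})$.

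The main obstacle will be the identity-of-indiscernibles step, specifically making sure the weight $\mathcal{L}(b-s)$ does not annihilate the contribution of the region where $\phi$ is positive. Since $\mathcal{L}$ is permitted to vanish at $0$ (as with the alpha-function kernel $\mathcal{L}_\alpha$), one must note that it can vanish only at the endpoint $s = b$, a set of measure zero; continuity of $\phi$ together with strict positivity of $\mathcal{L}$ on $(0, b-a]$ then suffices, and — in contrast with the $d_l$ proof — no case split on whether $u > a$ or $u = a$ is needed here, because the integrand of $d_n$ involves only the pointwise value $\phi(s)$ rather than a supremum over a shrinking interval.
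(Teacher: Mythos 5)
Your proposal is correct and follows essentially the same route as the paper: a \emph{reductio ad absurdum} for the identity of indiscernibles that exploits continuity of $s \mapsto |d(s,T)-d(s,\bar{T})|$ together with strict positivity of $\mathcal{L}$ on $(0,b-a]$, and the standard pointwise-then-integrate argument for the triangle inequality. Your extra details (picking an explicit spike $u$ in the symmetric difference, noting that $d(\cdot,T)$ is $1$-Lipschitz, and observing that $\mathcal{L}(b-s)$ can vanish only at the single point $s=b$) merely make explicit what the paper's proof leaves implicit.
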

\begin{proof}
In order to show that $d_n(\cdot,\cdot)$ is a metric we need to prove that it is non-negative, that $d_n(T,\bar{T}) = 0 \Leftrightarrow T = \bar{T}$ for any $T,\bar{T}$, that it is symmetric, and that it satisfies the triangle inequality.

Let $T,\bar{T} \in \mathcal{S}_{[a, b]}$. It is trivial to show that $d_n(T,\bar{T})$ is non-negative and symmetric, and that $T = \bar{T} \Rightarrow d_l(T,\bar{T}) = 0$. In order to prove the converse we use a \textit{reductio ad absurdum} argument. Assume $T \neq \bar{T}$ with $d_n(T,\bar{T}) = 0$. For $s \in [a,b]$ let
\begin{equation}
q(s) = \mathcal{L}(b-s) \; \left|d(s,T) - d(s,\bar{T})\right|.
\end{equation}
Because $\left|d(s,T) - d(s,\bar{T})\right|$ is continuous and $\mathcal{L}$ is continuous it results that $q$ is continuous. Because $T \neq \bar{T} \Rightarrow \exists \;  s \in [a,b]$ such that $d(s,T) \neq d(s,\bar{T})$; because $\mathcal{L}$ is strictly positive on $(0, b-a]$, it follows that $q$ is not zero everywhere, $\exists s \in [a,b)$ such that $q(s) > 0$. Because $q$ is continuous, it follows that
\begin{equation}
d_n(T,\bar{T}) = \int_{a}^{b} q(s) \; \textrm{d}s > 0
\end{equation}
which contradicts the hypothesis that $d_n(T,\bar{T}) = 0$. Hence, $T = \bar{T}$.

In order to prove the triangle inequality consider $\hat{T} \in \mathcal{S}_{[a, b]}$. We have, $\forall s \in \mathbb{R}$,
\begin{equation}
| d(s,T) - d(s,\hat{T}) + d(s,\hat{T}) -d(s,\bar{T})| \leq  |d(s,T) - d(s,\hat{T}) | + |d(s,\hat{T}) -d(s,\bar{T})|.
\end{equation}
Multiplying by $\mathcal{L}(\cdot)$ and integrating the above equation, we obtain
\begin{align}
&\int_a^b  \left|d(s,T) - d(s,\bar{T})\right| \; \mathcal{L}(b-s) \; \textrm{d}s   \le \nonumber \\
&\int_a^b  \left|d(s,T) - d(s,\hat{T})\right| \; \mathcal{L}(b-s) \; \textrm{d}s +
\int_a^b   \left|d(s,\hat{T}) - d(s,\bar{T})\right| \; \mathcal{L}(b-s) \; \textrm{d}s
\end{align}
It results that
\begin{equation}
d_n(T,\bar{T}) \le d_n(T,\hat{T}) + d_n(\hat{T},\bar{T}).
\end{equation}
With this final inequality we have shown that the distance is indeed a metric and the proof ends.
\end{proof}

\section{Algorithms}

\SetAlFnt{\small}
\SetAlCapSkip{1ex}
\SetAlgoInsideSkip{smallskip}
\SetCommentSty{textrm}
\SetKwComment{Comment}{}{}
\SetKwFunction{and}{and}
\SetKwFunction{length}{length}
\SetKwFunction{sort}{sort}

\begin{algorithm}[ph]
\KwIn{The pair of nonempty spike trains $T_1$, $T_2$ and the bounds $a$ and $b$.}
\KwOut{The distance  $d_o (T_1, T_2)$ between the spike trains.}
\Comment{$\langle$$T_1$, $T_2$ and $P$ are ordered sets of real numbers, indexed starting from 0.$\rangle$}
$n_1:=\length(T_1); \; n_2:=\length(T_2)$\;
$P:=\text{sort}(T_1 \; \bigcup \; T_2$)\;
$M:=\emptyset$\;
\For{$i:= 1 \ldots n_1-1$} {
    $M:= M \; \bigcup \; \{(T_1[i]+T_1[i-1])/2\}$\;
}
\For{$i:= 1 \ldots n_2-1$} {
    $M:= M \; \bigcup \; \{(T_2[i]+T_2[i-1])/2\}$\;
}
\For{$i:= 1 \ldots \length(P)-1$} {
    $M:= M \; \bigcup \; \{(P[i]+P[i-1])/2\}$\;
}
$P:=\text{sort}(P \; \bigcup \; M  \; \bigcup \; \{a, b\}$)\;
$d_o:=0; \; i_1:=0; \; i_2:=0$\;
\Comment{$\langle$$s$ is the currently considered point from $P$. $\phi$ is the value at $s$ of the integrated function $|d(s,T_1) - d(s, T_2)|$. $s_p$ is the previously considered point from $P$, and $\phi_p$ is the value at $s_p$ of the integrated function. $i_1$ is the index of the first spike in $T_1$ having a timing that is greater than $s$, if there is such a spike, or the index of the last spike of $T_1$, otherwise. $i_2$ is computed analogously for $T_2$.$\rangle$}
$s_p:=a; \; \phi_p:=|T_1[0]-T_2[0]|$\;
 \For{$i:= 1 \ldots \length(P)-1$}{
    $s:=P[i]$\;
    \While{$s \geq T_1[i_1] \; \and \; i_1<n_1-1$}{
        $i_1:=i_1+1$\;
    }
    \While{$s \geq T_2[i_2] \; \and \; i_2<n_2-1$}{
        $i_2:=i_2+1$\;
    }
    $d_1:=b-a; \; d_2:=b-a$\;
    \If{$i_1>0$}{
        $d_1:=s-T_1[i_1-1]$\;
    }
    $d_1'=|T_1[i_1]-s|$\;
    \If{$d_1'<d_1$}{
        $d_1:=d_1'$\;
    }
    \If{$i_2>0$}{
        $d_2:=s-T_2[i_2-1]$\;
    }
    $d_2'=|T_2[i_2]-s|$\;
    \If{$d_2'<d_2$}{
        $d_2:=d_2'$\;
    }
    \Comment{$\langle$We now have $d_1=d(s, T_1)$ and $d_2=d(s, T_2)$ and we can compute the value of $\phi$ at $s$:$\rangle$}
    $\phi:=|d_1-d_2|$\;
    \Comment{$\langle$The integration is performed here:$\rangle$}
    $d_o:=d_o+(s-s_p)(\phi+\phi_p)/2$\;
     $s_p:=s; \; \phi_p:=\phi$\;
 }
\Return $d_o$\;
\caption{An algorithm for computing the modulus-metric distance $d_o$ between two spike trains $T_1$ and $T_2$. The text surrounded by  $\langle\ldots\rangle$  represents comments.
}
\label{algorithm1}

\end{algorithm}


\begin{algorithm}[ph]
\KwIn{The pair of non-empty spike trains $T_1$, $T_2$ and the bounds $a$ and $b$.}
\KwOut{The distance  $d_o (T_1, T_2)$ between the spike trains.}
\Comment{$\langle$$T_1$ and $T_2$ are ordered sets of real numbers, indexed starting from 0. $i_1$ and $i_2$ are the indices of the currently processed spikes in the two spike trains.  $p_1$ and $p_2$ are the indices of the previously processed spikes in the two spike trains. $p$ is the index of the spike train to which the previously processed spike belonged (1 or 2), after at least one spike has been processed, or 0 otherwise.$\rangle$}
$i_1:=0; i_2:=0; p_1:=0; p_2:=0; p:=0$\;
$n_1:=\length(T_1); \; n_2:=\length(T_2)$\;
\Comment{$\langle$$P$ is an array of structures $(s, \phi)$ consisting of an ordered pair of numbers.$\rangle$}
$P:=\left\{\left( s \mapsto a, \; \phi \mapsto \left|T_1[0]-T_2[0]\right| \right) \right\}$\;
\Comment{$\langle$Process the spikes until the end of one of the spike trains is reached.$\rangle$}
\While{$i_1 < n_1 \; \and \; i_2 < n_2$}{
    \If{$T_1[i_1] \leq T_2[i_2]$}{
       proc1(1, 2)\;
    }
    \Else{
        proc1(2, 1)\;
    }
}
\Comment{$\langle$Process the rest of the spikes in the spike train that has not been fully processed.$\rangle$}
\While{$i_1 < n_1$}{
    proc2(1, 2)\;
}
\While{$i_2 < n_2$}{
    proc2(2, 1)\;
}
$P:=P \; \bigcup \; \left\{ \left( s \mapsto b, \; \phi \mapsto \left|T_1[n_1-1]-T_2[n_2-1]\right| \right) \right\}$\;
\Comment{$\langle$Sort $P$. Elements of $P$ are sorted according to their value of $s$.$\rangle$}
$P:=\text{sort}(P)$\;
\Comment{$\langle$Perform the integration.$\rangle$}
$d_o:=0$\;
\For{$i:= 1 \ldots \length(P)-1$}{
    $d_o:=d_o+(P[i].s-P[i-1].s)(P[i].\phi+P[i-1].\phi)/2$\;
 }
 \Return $d_o$\;
 \caption{Another algorithm for computing the modulus-metric distance $d_o$ between two spike trains $T_1$ and $T_2$.
The text surrounded by  $\langle\ldots\rangle$  represents comments.
 }
 \label{algorithm2}
\end{algorithm}

\begin{procedure}[ph]
\KwIn{The indices $j$ and $k$ of the two sorted spike trains $T_j$, $T_k$; $j, k \in \{1, 2\}$, $j \neq k$.}
 \KwData{Uses as global variables: the indices $i_j$,  $i_k$ of the current spikes in the two spike trains; the indices $p_j$,  $p_k$ of the previously processed spikes in the two spike trains; the index $p$ of the spike train to which the previously processed spike belonged (1 or 2; if no spike has been previously processed, $p=0$); the data structure $P$. We should have here $T_j[i_j] \leq T_k[i_k]$. If $p \neq 0$, we should have $T_p[i_p] \leq T_j[i_j]$.}
\KwResult{Performs part of the processing needed for creating $P$. The procedure is used when processing has not reached the end of one of the spike trains.}
\If{$i_j>0$}{
    \Comment{$\langle$Adds to $P$ the timing situated at the middle of the interval between the currently processed spike and the previous spike in the same spike train.$\rangle$}
    $t:=(T_j[i_j]+T_j[i_j-1])/2$\;
    \Comment{$\langle$We have $d(t, T_j)=T_j[i_j]-t=t-T_j[i_j-1]=(T_j[i_j]-T_j[i_j-1])/2$.$\rangle$}
    $P:=P \; \bigcup \; \left\{\left( s \mapsto t, \; \phi\mapsto \left|(T_j[i_j]-T_j[i_j-1])/2)-d(t, k, i_k)\right| \right)\right\} $\;
}
\If{$p = k$}{
    \Comment{$\langle$If the previously processed spike was one from the other spike train than the spike currently processed, adds to $P$ the timing situated at the middle of the interval between the currently processed spike and the previously processed spike.$\rangle$}
    $t:=(T_j[i_j]+T_k[p_k])/2$\;
    \Comment{$\langle$Since $t$ is at equal distance to the closest spikes in the two spike trains, $T_j[i_j]$ and $T_k[p_k]$, we have $d(t, T_j)=d(t, T_k)$ and $\phi(t)=0$.$\rangle$}
    $P:=P \; \bigcup \; \left\{\left( s \mapsto t, \; \phi \mapsto 0 \right) \right\} $\;
}
\Comment{$\langle$Adds to $P$ the currently processed spike.$\rangle$}
$t:=T_j[i_j]$\;
\Comment{$\langle$We have $d(t, T_j)=0$. If at least one spike from $T_k$ has been processed, we have $T_k[p_k] \leq t \leq T_k[i_k]$, with $i_k=p_k+1$, and thus $d(t, T_k)=\min(|t-T_k[p_k]|, \; T_k[i_k]-t)$. If no spike from $T_k$ has been processed, we have $p_k=i_k=0$, and the previous formula for $d(t, T_k)$ still holds.$\rangle$}
$P:=P \; \bigcup \; \left\{\left( s \mapsto t, \; \phi \mapsto \min(|t-T_k[p_k]|, \; T_k[i_k]-t)\right)\right\} $\;
$p_j:=i_j$\;
$i_j:=i_j+1$\;
$p:=j$\;
\caption{proc1($j, k$).}
\end{procedure}

\begin{procedure}[ph]
\KwIn{The indices $j$ and $k$ of the two sorted spike trains $T_j$, $T_k$; $j, k \in \{1, 2\}$, $j \neq k$.}
\KwData{Uses as global variables: the index $i_j$ of the current spike in $T_j$; the index $p_k$ of the previously processed spike in $T_k$; the index $p$ of the spike train to which the previously processed spike belonged (1 or 2); the data structure $P$. Here, $p_k$ should be the index of the last spike in spike train $T_k$. We should have $T_k[p_k] \leq T_j[i_j]$.}
\KwResult{Performs part of the processing needed for creating $P$. The procedure is used when processing has reached the end of spike train $T_k$.}
\If{$i_j>0$}{
    \Comment{$\langle$Adds to $P$ the timing situated at the middle of the interval between the currently processed spike and the previous spike in the same spike train.$\rangle$}
    $t:=(T_j[i_j]+T_j[i_j-1])/2$\;
    \Comment{$\langle$We have $d(t, T_j)=T_j[i_j]-t=t-T_j[i_j-1]=(T_j[i_j]-T_j[i_j-1])/2$.$\rangle$}
    $P:=P \; \bigcup \; \left\{\left( s \mapsto t, \; \phi \mapsto \left|(T_j[i_j]-T_j[i_j-1])/2)-d(t, k, p_k)\right| \right) \right\}$\;
}
\If{$p = k$}{
    \Comment{$\langle$If the previously processed spike was one from the other spike train than the spike currently processed (i.e., the last spike in the spike train that has been fully processed), adds to $P$ the timing situated at the middle of the interval between the currently processed spike and the previously processed spike.$\rangle$}
    $t:=(T_j[i_j]+T_k[p_k])/2$\;
    \Comment{$\langle$Since $t$ is at equal distance to the closest spikes in the two spike trains, $T_j[i_j]$ and $T_k[p_k]$, we have $d(t, T_j)=d(t, T_k)$ and $\phi(t)=0$.$\rangle$ }
     $P:=P \; \bigcup \; \left\{\left( s \mapsto t, \; \phi \mapsto 0 \right)\right\} $\;
}
\Comment{ $\langle$ Adds to $P$ the currently processed spike. $\rangle$ }
$t:=T_j[i_j]$\;
\Comment{ $\langle$ We have $d(t, T_j)=0$. We have $T_k[p_k] \leq t$ and the spike at $p_k$ is the last one in $T_k$, and thus $d(t, T_k)=t-T_k[p_k]$.$\rangle$}
$P:=P \; \bigcup \; \left\{\left( s \mapsto t, \; \phi \mapsto t-T_k[p_k]\right) \right\}$\;
$i_j:=i_j+1$\;
$p:=j$\;
\caption{proc2($j, k$).}
\end{procedure}

\begin{function}[h!t]
\KwIn{A timing $t$, the index $k \in \{1, 2\}$ of a sorted spike train $T_k$,  and the index $i$ of a spike in $T_k$, such that either $t \leq T_k[i]$ or $i$ is the index of the last spike of $T_k$.}
\KwOut{The distance  $d(t, T_k)$ between the timing $t$ and the spike train $T_k$.}
$d:=|T_k[i]-t|$\;
$j:=i-1$\;
\While{$j \geq 0 \; \and \; |T_k[j]-t| \leq d$}{
        $d:=|T_k[j]-t|$\;
        $j:=j-1$\;
    }
\Return $d$\;
\caption{d($t, k, i$).}
\end{function}

\newpage
\bibliography{metrica}

\end{document}